\title{Balancing Participation and Decentralization in Proof-of-Stake Cryptocurrencies}
\titlerunning{Participation and Decentralization in PoS Cryptocurrencies} 
\author{Aggelos Kiayias}{University of Edinburgh and IOG, UK}{aggelos.kiayias@iohk.io}{}{}
\author{Elias Koutsoupias}{University of Oxford and IOG, UK}{elias.koutsoupias@iohk.io}{}{}
\author{Francisco Marmolejo-Cossio}{Harvard University and IOG, USA}{fjmarmol@seas.harvard.edu}{}{}
\author{Aikaterini-Panagiota Stouka\footnote{Part of this work was conducted while Stouka was a research associate at the Edinburgh Blockchain Technology Lab}}{Nethermind, UK}{aikaterini-panagiota@nethermind.io}{}{}
\authorrunning{Kiayias et al.} 
\keywords{delegation games, proof of stake, cryptocurrencies, decentralization} 
\newcommand{\vect}[1]{\boldsymbol{\mathbf{#1}}}
\newcommand\cA{\mathcal{A}}
\newcommand\cG{\mathcal{G}}
\newcommand\cX{\mathcal{X}}
\newcommand\cD{\mathcal{D}}
\begin{document}

\maketitle

\begin{abstract}
Proof-of-stake blockchain protocols have emerged as a compelling paradigm for organizing distributed ledger systems. In proof-of-stake (PoS), a subset of stakeholders participate in validating a growing ledger of transactions. For the safety and liveness of the underlying system, it is desirable for the set of validators to include multiple independent entities as well as represent a non-negligible  percentage of the total stake issued. In this paper, we study a secondary form of participation in the transaction validation process, which takes the form of stake delegation, whereby an agent delegates their stake to an active validator who acts as a stake pool operator. We study reward sharing schemes that reward agents as a function of their collective actions regarding stake pool operation and delegation. Such payment schemes serve as a mechanism to incentivize participation in the validation process while maintaining decentralization. We observe natural trade-offs between these objectives and the total expenditure required to run the relevant reward schemes. Ultimately, we provide a family of reward schemes which can strike different balances between these competing objectives at equilibrium in a Bayesian game theoretic framework.
\end{abstract}

\section{Introduction}

Proof-of-stake (PoS) blockchain protocols have emerged as a compelling paradigm for organizing distributed ledger systems. Unlike Proof-of-work (PoW), where computational resources are expended for the opportunity to append transactions to a growing ledger, PoS protocols designate the potential to update the ledger proportionally to the stake one has within the system. Common to both approaches is the fact that larger and more varied participation in the transaction validation process provides the system with increased security and resilience to faults. 

Although participating as a validator in a PoS protocol is computationally less intensive than doing so in a PoW protocol, it still demands some effort, e.g., that the validator is consistently online and maintains dedicated hardware and software, thus it is still not the case that every agent in the system decides to, or is even able to, do so. Given this,  a compelling intermediate form of participation in the transaction validation process is stake delegation. In PoS systems with stake delegation (sometimes referred to as ``liquid staking''), validators can be considered stake pool operators (SPOs), who activate pools controlling their own as well as delegated stake of others. Agents who prefer not to participate as validators have the opportunity to delegate their stake to active pools and gain rewards. In this paradigm, pools are chosen to update the ledger proportional to the combination of their ``pledged stake'' (i.e., stake they contribute)  and externally delegated stake (stake contributed to them by others); in this way, delegation can be seen as a vetting of how frequently operators should be selected to participate. Furthermore, delegation is not borne out of good will alone, since the system provides additional payments to all agents as a function of the profile of pool operators and delegators in the system. The space of payment mechanisms provides for an interesting problem in balancing three objectives: increasing participation in the validation protocol of the system (via delegation), maintaining a decentralized validation creation process (in spite of added delegation), and balancing the budget of rewards to be given to operators and delegators. 

\subsection{Related Work and Motivation}

Our work is most related to that of Brunjes et al.~\cite{brunjes2020reward} which introduces a reward sharing scheme for stake pools as a mechanism to incentivize decentralization --- a key objective shared with our work. The reward sharing scheme of their paper has been operational on the Cardano mainnet since July 2020.\footnote{\url{https://roadmap.cardano.org/en/shelley/}} In this reward sharing scheme,  the system is modulated by two system parameters, $k$, an integer representing the number of pools which are formed at equilibrium and $\alpha$, a bonus given to SPOs that distinguishes between pledged and delegated stake. This parametric formulation has the added benefit of ensuring $k$ pools of equal size being formed and preventing a single entity with low stake from controlling the majority of pools. Continuing with this line of work, Ovezik and Kiayias \cite{ovezik2022decentralization} analyze the Nash dynamics of the Cardano reward sharing scheme and the decentralization that it offers through metrics similar to those we employ to measure decentralization. In more detail, they use a variation of the Nakamoto coefficient \cite{lin2021measuring} that takes into account not only the number of pools in the system, but also the overall composition of stake of the operators who run the pools. This metric can be loosely interpreted as a measure of the composition of ``skin in the game" that SPOs have, by looking at the overall pledged stake from SPOs that may have enough cumulative stake (pledged and delegated) to perform an attack on the system. Multiple subsequent papers have proposed other metrics for decentralization of blockchain protocols (with applicability beyond PoS consensus), including \cite{azouvi2021levels},\cite{lin2021measuring},\cite{LEE2021278},\cite{karakostas2022sok}, \cite{gencer2018decentralization}, \cite{decentrfruitchain}, \cite{Leonardos2019OceanicGC}, \cite{gersbach2022staking},
\cite{ovezik2022decentralization}. 

Both \cite{brunjes2020reward} and \cite{ovezik2022decentralization} use in their analysis a framework for incentives called \textit{non-myopic utility} that tries to predict how delegators will choose a pool when the system stabilizes at equilibrium. This analysis is essential because a key component of their reward mechanism is the \textit{margin} of rewards an SPO keeps for themselves before further sharing rewards with delegators. Motivated by the above, we present a variation of the reward schemes of \cite{brunjes2020reward} in which the margin of the operators is implicitly set by the system. Most importantly, we study trade-offs between three competing objectives for the system: decentralization, overall participation, and the expenditure of the reward sharing scheme used. Furthermore, we study this performance in the presence of users who are only willing to delegate their stake if the reward they earn is lower bounded by an amount $\epsilon$ (i.e., users who may be ``lazy'', or who may have external sources of earnings for their stake). 
 
\paragraph*{Liquid Staking Protocols on Ethereum}
The framework of reward sharing schemes that we present is general enough to encompass key features of liquid staking protocols (LSPs) which are increasingly used in the Ethereum blockchain after its transition to PoS consensus \cite{gogol2024sok}. At a high level, LSPs allow users to ``stake'' their cryptocurrency (such as ETH) to be used for validation even when their cryptocurrency held is below the 32 ETH threshold required to be a validator. Upon staking their assets, users receive a liquid token in return, representing the staked assets. These liquid tokens can be used in various decentralized finance (DeFi) applications, providing liquidity and earning additional rewards while the original stake continues to generate staking rewards when used to facilitate validation. Validators for LSPs are equivalent to SPOs in our model, and individuals who mint LSPs are similar to delegators in our model. Rewards are inherently generated by the Ethereum validation process and shared according to the specification of the corresponding LSP.  

Currently on Ethereum, more than 25\% of all ETH in circulation is staked to be used for PoS validation, of which more than 50\% is attributed to 5 validators participating in LSPs (amounting overall to approximately 50 billion USD as of May 2024)\footnote{\url{https://defillama.com/protocols/liquid\%20staking/Ethereum}}. Of all staking in LSPs, the majority is deposited to the Lido LSP (approximately 29\% of the total ETH in circulation), which facilitates staking ETH to a permissioned set of validators designated by the Lido DAO \cite{lido2021}. However, most relevant to our work is the permissionless validator setting, exemplified by Rocket Pool \cite{rocketpool2021} and ether.fi \cite{etherfi2023} in which any user can become a validator as long as they have enough ETH pledged as collateral, according to the specifications of the underlying LSP. Note that in these systems, providing collateral is deemed essential for aligning incentives of validators, for without locking collateral, they can mount attacks on the LSP by shorting liquid stake tokens with nothing to lose. In our work, SPOs choose how much stake to pledge to the pool they operate, and this quantity plays the same essential role as locked collateral in LSPs, forcing SPOs to have ``skin in the game" in terms of the consensus validation process.

\subsection{Overview}

We consider a setting where a finite number of agents owns a publicly known amount of stake in a decentralized system. Agents are at a high level given three options:
\begin{itemize}
    \item They can create a stake pool, whereby they can be delegated stake from other players. Such agents are called pool operators. To be a pool operator, the agent must pledge whatever stake they own and, in addition, incur a private pool operating cost of $c> 0$.
    \item They can delegate their stake to pools that are in operation. Such agents are called delegators.
    \item They can decide to abstain from participating in the protocol and remain idle, earning baseline utility $\epsilon> 0$. 
\end{itemize}

Note we do not model the scenario where agents can create multiple identities (i.e. perform sybil attacks), or where they can pool resources outside of the system and coordinate as what seems to be a single agent in the system. 
We stress that the scope of this paper is to show that there are still important trade-offs (Decentralization, Participation and System Expenditure) that system designers need to consider in the setting where agents are identified via their wallet.
Broadening the model to permit a richer set of agent behaviours, such as Sybil attacks, is an important future area of research. 



\paragraph*{Participation}  We are interested in systems that encourage increased participation in the overall validation process. To prevent agents from abstaining from the protocol (and hence participating), they must at least be able to delegate in such a way as to earn more than $\epsilon$, their baseline utility for remaining idle as far as staking is concerned. 

\paragraph*{Rewards and Incentives} The aforementioned structure alone does not provide incentives to drive agents' actions. To create such incentives, we consider reward schemes whereby pool operators and delegators are compensated as a function of which pools are active and whom delegators choose to delegate to. As we will see in the following section, this creates a well-defined family of one-shot games that are played between all agents in the system, and we study the equilibria that result as a function of the reward scheme implemented.

\paragraph*{Informal Design Objectives} Our main objective is to create reward schemes that optimise for three distinct objectives: 
\begin{itemize}
    \item Increasing participation in the system.
    \item Increasing Decentralization, i.e. preventing stake from overly accumulating (via delegation) in the hands of few pool operators.
    \item Minimizing the budget necessary to achieve the above.
\end{itemize}

\subsection{Roadmap of our Results}

We consider the setting in which stakeholders of a PoS blockchain can either operate pools (receive delegation), delegate their stake, or abstain from the protocol, where each of these actions provides a certain reward from the system. Section \ref{sec:delegation-game} begins by introducing the notion of a delegation game, which is a general framework for encapsulating strategic considerations between stakeholders in this setting. At the end of Section \ref{sec:delegation-game}, we introduce the notion of a uniform reward delegation game, which is a refinement of general delegation games by which all delegators in the system (roughly) earn a uniform reward per unit of stake that they delegate. Within the class of uniform delegation games we further hone our focus on proper delegation games which we define in such a way to exemplify relevant characteristics of existing reward sharing schemes deployed in practice. In Section \ref{sec:PNE-in-proper-delegation-game} we provide sufficient conditions for pure Nash equilibria in proper delegation games. Section \ref{sec:bayesian-setting} introduces a Bayesian framework to proper delegation games and explores novel solution concepts intricately tied to ex post pure Nash equilibria. In Section \ref{sec:objectives} we introduce the main metrics by which we compare the equilibria of the Bayesian proper delegation game: participation, decentralization and system expenditure. Section \ref{sec:computational-methods-results} provides details on the computational methods used to evaluate the performance of payment schemes in proper delegation games at equilibrium, along with experimental results. Finally, Section \ref{sec:conclusion} provides a conceptual overview of the results obtained and provides future directions of work.  

\section{The Delegation Game}
\label{sec:delegation-game}

We now formalize the general family of games which govern agent decisions regarding whether to create a pool or delegate their stake. We consider $n > 0$ players, each with a publicly known stake, $s_i > 0$. Additionally, we assume that any agent who chooses to operate a pool and participate actively incurs a fixed cost of $c_i > 0$. Finally, we assume that each player has a fixed utility for non participation in delegation, which we denote by $\epsilon_i > 0$. Such a utility can encompass the fact that an agent may find participating in stake delegation prohibitively complicated, or that they prefer using their stake in other ways (such as other governance or DeFi protocols, for example).   
 
\paragraph*{Player Strategies} For each player, $i \in [n]$, let $\mathcal{D}_i$ denote the set of functions $d_i: [n]\setminus\{i\} \rightarrow \mathbb{R}^{+}$ such that $\sum_{j \in [n]\setminus\{i\}} d_i(j) = s_i$. The action space of the $i$-th player corresponds to the set $\mathcal{A}_i = \{a_I\} \cup \{a_{SPO}\} \cup \mathcal{D}_i$. We further denote the space of all joint strategy profiles by $\cA = \prod_i \cA_i$. A joint strategy profile of the game is a vector $\vect{p} = (p_i)_{i=1}^n \in \cA$, where $p_i \in \cA_i$ denotes the action taken by the $i$-th agent. Furthermore, for a fixed agent $i \in [n]$, we let $\cA_{-i}$ denote the action space of all players other than $i$, such that $\vect{p}_{-i} \in \cA_{-i}$ denotes a specific collection of strategies for all players in $[n] \setminus \{i\}$, and $\vect{p} = (p_i,\vect{p}_{-i}) \in \cA$ denotes a strategy profile that makes specific reference to the action $p_i \in \cA_i$ played by the $i$-th player. There are 3 relevant cases for the values $p_i$ can take and hence the actions that the $i$-th player can take:
\begin{itemize}
    \item $p_i = a_{I}$ represents non-participation in delegation for the $i$-th agent. We say that the agent is {\it idle}.
    \item $p_i = a_{SPO}$ occurs when the $i$-th player chooses to operate their pool. To do so, they pledge their stake, $s_i$, to the pool and incur a pool operation cost of $c_i$. We say the agent is a {\it stake pool operator (SPO)}. 
    \item $p_i = d_i \in \cD_i$ occurs when the $i$-th player chooses to delegate their stake, $s_i$, to different pools operated by other agents. We call $d_i$ the player's {\it delegation profile}. For each $j \in [n]\setminus \{i\}$, the player $i$ delegates $d_i(j)$ stake to a pool operated by the $j$-th agent. We say that the agent is a {\it delegator}. 
\end{itemize}
\begin{definition}[Active-Inactive Pool]
A pool $j$ will be called active in the joint strategy profile $\vect{p} \in \cA$ if $p_j= a_{SPO}$. That is, if player $j$ has pledged their stake, $s_j$, to operate their pool. If this is not the case, we say that the pool $j$ is inactive.
\end{definition}

\paragraph*{Rewards}
For each agent, $i \in [n]$, we let $R_i: \cA \rightarrow \mathbb{R}$ be their delegation game reward function. If $\vect{p}\in \cA$ is a joint strategy profile of all agents, $R_i(\vect{p})$ denotes the reward obtained by the $i$-th agent. We impose the following constraints on $R_i$:
\begin{itemize}
    \item If $p_i = a_{I}$, then $R_i(\vect{p}) = \epsilon_i$. 
    \item If $p_i = d_i \in \cD_i$, then the reward, $R_i(d_i,\vect{p}_{-i})$ can be further decomposed as the sum of $n-1$ delegation reward functions: $R_i(d_i,\vect{p}_{-i}) = \sum_{j \in [n]\setminus\{i\}} R_{i,j}(d_i(j),\vect{p}_{-i})$ which satisfy two constraints:
    \begin{itemize}
        \item $R_{i,j}(0,\vect{p}_{-i}) = 0$ for all $\vect{p}_{-i} \in \cA_{-i}$. That is, no rewards can be earned by abstaining from delegating to a given pool.
        \item If  pool $j$ is not active (that is, $p_j \neq a_{SPO}$), then $R_{i,j}(d_i(j),\vect{p}_{-i}) = 0$. More succinctly, if a player delegates stake to an inactive pool, they receive no reward.
    \end{itemize}
\end{itemize}

\paragraph*{Utilities}
For each $i \in [n]$, we let $u_i: \cA \rightarrow \mathbb{R}$, denote the $i$-th player's utility, given by $u_i(\vect{p}) \in \mathbb{R}$ for a joint strategy $\vect{p} \in \cA$. 
In our setting, we define utilities in terms of the aforementioned reward function:
\begin{equation}
\label{eq:utilities-delegation-game}
  u_{i}(\vect{p})) =
    \begin{cases}
      \epsilon_i & \text{if $p_i = a_{I}$}\\
      R_i(\vect{p})) - c_i & \text{if $p_i = a_{SPO}$}\\
      R_i(\vect{p}))  & \text{if $p_i \in \cD_i$}      
    \end{cases}       
\end{equation}

\begin{definition}[The Delegation Game]
\label{delegationgame}
Suppose that we have $n$ agents with publicly known stakes denoted by $\vect{s} = (s_i)_{i=1}^n$, privately known pool operation costs $\vect{c} = (c_i)_{i=1}^n$ and privately known idle utilities $\vect{\epsilon} = (\epsilon_i)_{i=1}^n$. In addition, suppose that $\vect{R} = (R_i)_{i=1}^n$ is a family of reward functions $R_i: \cA \rightarrow \mathbb{R}^{+}$. 
We let $\mathcal{G}(\vect{R},(\vect{s},\vect{c},\vect{\epsilon}))$ be the corresponding game with induced utilities $\vect{u} = (u_i)_{i=1}^n$ from above. This game is called the ``Delegation Game'' for $\vect{s},\vect{c}$,$\vect{\epsilon}$, and $\vect{R}$.    
\end{definition}

\subsection{Games with Uniform Delegation Rewards}

Given the large class of delegation games described above, we focus on a natural class of delegation games similar to what is used on the Cardano blockchain \cite{brunjes2020reward}. Cardano rewards have the following relevant high-level characteristics: 

\begin{enumerate}
\item Each pool $j$ receives a total amount of rewards according to a {\it pool reward function} that takes as input the stake of the pool operator and the stake delegated to the pool.
\item The pool operator may keep an amount of the pool rewards. They do so by picking a margin of pool rewards to keep. 
\item The remaining pool rewards (called \textit{Pool Member Rewards}) are proportionally shared. 
amongst the pool operator and delegates to the pool. 
\end{enumerate}

The subclass of delegation games we study in this paper will incorporate similar pool reward functions, hence to proceed, we define the following important terms that result from a joint strategy profile $\vect{p} \in \cA$:
 \begin{itemize}
 \item $\beta_j$: the external stake delegated to pool $j$ under $\vect{p}$. This is given by $\beta_j = \sum_{i:p_i \in \cD_i} d_i(j)$.
 \item $\lambda_j$: the operator pledge of pool $j$. This is given by $\lambda_j=s_j$, when $p_j=a_{PO}$; otherwise it is $\lambda_j=0$.
 \item $\sigma_j$: the total stake of a pool $j$. This is given by $\lambda_j + \beta_j$.
  \end{itemize}

\begin{definition}[Pool Reward Function]
A pool reward function is given by $\rho:(\mathbb{R}^{+})^2 \rightarrow \mathbb{R}^{+}$ that takes as input the pledged stake of its pool leader, $\lambda_j$, and the external stake delegated to the pool, $\beta_j$ and outputs the rewards that correspond to pool $j$, given by $\rho(\lambda_j,\beta_j)$. 
\end{definition}

As detailed in \cite{brunjes2020reward}, the Cardano pool reward function has the further property that rewards are capped (so that pools stop earning surplus rewards once they reach a certain size), and the rewards themselves can be decomposed into a specific algebraic form which we call separable: 

\begin{definition}[Capped Separable Pool Reward Function]
Let $\tau > 0$ and $a,b: \mathbb{R}^+ \rightarrow \mathbb{R}^+$ and define $\rho:(\mathbb{R}^{+})^2 \rightarrow \mathbb{R}^{+}$ as follows:
$$
\rho(\lambda,\beta) = a(\lambda') + b(\lambda')\beta',
$$  
where $\lambda' = \min\{\tau,\lambda\}$ and $\beta' = \min\{\tau - \lambda', \beta\}$. We say that $\rho:(\mathbb{R}^{+})^2 \rightarrow \mathbb{R}^{+}$ is a capped pool reward function with a cap given by $\tau$. In addition, we say that $\rho$ is separable into $a$ and $b$, where $a$ is the pool's {\it pledge reward component} and $b$ is the pool's {\it external delegation reward component}.
\end{definition}

Upon close inspection, Delegation games, as per Definition~\ref{delegationgame}, already exemplify an important point of departmure from Cardano reward sharing schemes. Namely, our setting has a simpler action space for agents amounting to mostly the high-level choice of: being an SPO, being a delegator, and being idle. In Cardano, rewards have a more complicated action space whereby beyond the choice to become an SPO, agents can also pick the margin of rewards they wish to keep as SPOs. In \cite{brunjes2020reward}, the authors study the parametric family of pool reward functions used in Cardano to show that when players are non-myopic, one can modulate the number of pools, $k$, which are formed at equilibrium. An important characteristic of these equilibria though is the fact that pool operators choose a margin such that delegators are indifferent amongst the $k$ active pools in terms of the delegation reward they obtain from them (i.e. the proportional rewards after margins are taken by pool operators). Rather than letting agents reach such an outcome at equilibrium, we study delegation games with the very property that delegators earn the same per-unit reward mostly irrespective of the pool to which they delegate. In order to do so, we introduce the notion of delegator rewards:

\begin{definition}
A delegation reward function is given by $r:\cA \times (\mathbb{R}^{+})^n \rightarrow \mathbb{R}^{+}$ which takes as input the publicly known joint strategy $\vect{p} = (p_i)_{i=1}^n$ and stake distribution $\vect{s} = (s_i)_{i=1}^n$
to output a \textit{fixed} reward per unit of delegated stake given by 
$r(\vect{p},\vect{s})$.
\end{definition}

We will shortly precisely define delegation games with uniform delegation rewards, but at a high level these games have reward functions that automatically enforce the fact that for a given strategy profile, delegators will receive $r(\vect{p},\vect{s})$ rewards per unit of delegation. Continuing with the comparison with Cardano, at equilibrium, it is not the case that all pools have equal per-unit delegation rewards, but rather the $k$ pools which offer the best per-unit delegation rewards to delegators 
which are, in turn, those pools with the most profitable combination of pledge and cost). It can very well be the case that a suboptimal pool remain in operation, albeit offering lower per-unit rewards to potential delegators. In this spirit, we define the notion of pool feasibility, which serves as a way to determine which pools are suboptimal. Suboptimality will mean that the cumulative earnings of all agents involved in a pool (including the SPO) is less than what they would earn as delegators according to the delegation reward function $r$.  

\begin{definition}[Pool feasibility]
For a given joint strategy profile $\vect{p}$, we call the $i$-th pool {\em feasible} if $p_i=a_{SPO}$ and
$\rho(\lambda_i, \beta_i) \geq \sigma_i r(\vect{p},\vect{s})$.
\end{definition}

Now we have everything in hand to define the notion of a delegation game with uniform delegate rewards. We specify the rewards that each agent earns in the game.

\begin{definition}[Uniform Delegation Agent Rewards]
\label{uniform}
Suppose that we have $n$ agents with stake distribution $\vect{s}$, participation costs $\vect{c}$, and idle utilities $\vect{\epsilon}$. Furthermore, suppose that $\vect{p} \in \cA$ is a joint strategy profile such that $p_i = d_i \in \cD_i$. If we let $r = r(\vect{p},\vect{s})$, then the components of the reward function for the $i$-th agent are: 
\begin{equation}
\label{eq:constant-delegation-delegate-rewards}
  R_{i,j}(d_i(j),\vect{p}_{-i}) =
    \begin{cases}
      r \cdot d_i(j) & \text{if pool $j$ is active and feasible}\\
      \frac{d_i(j)}{\sigma_j} \cdot \rho(\lambda_j,\beta_j) & \text{if pool $j$ is active and not feasible}\\
      0 & \text{if pool $j$ is not active}
    \end{cases}       
\end{equation}
With this in hand, we can fully define the reward function for the $i$-th agent under arbitrary actions as follows:
\begin{equation}
\label{eq:constant-delegation-pool-rewards}
  R_{i}(\vect{p}) =
    \begin{cases}
      \epsilon_i & \text{if $p_i = a_I$}\\
      \rho(\lambda_i,\beta_i) - r \cdot \beta_i & \text{if $p_i = a_{SPO}$ and pool $i$ is feasible}\\
      \frac{\lambda_i}{\sigma_i} \cdot \rho(\lambda_i,\beta_i) & \text{if $p_i = a_{SPO}$ and pool $i$ not feasible}\\
      \sum_{j \in [n]\setminus\{i\}} R_{i,j}(d_i(j),\vect{p}_{-i}) & \text{if $p_i = d_i\in \cD_i$}
    \end{cases}       
\end{equation}
If a delegation game $\cG$ has uniform delegation rewards, we say it is a {\it uniform delegation reward game}. 
\end{definition}

\subsubsection{Narrowing Down Delegation Rewards}

The final component we need to specify in order to delve into delegation game equilibria is the delegation reward function that we use. In \cite{brunjes2020reward} the authors show that at equilibrium, delegator rewards are essentially specified by the most competitive agent who misses out on becoming an SPO. Essentially, if one ranks pools according to potential profits at saturation, then there are equilibria where the top $k$ pools are active and have margins such that the cut of rewards which go to delegators for each of these pools equals the potential profit of the potential $(k+1)$-th pool. We recall that $k$ is a parameter of the reward sharing scheme that is intended to modulate the number of pools in the system. Moreover, this phenomenon intuitively makes sense, for the top $k$ agents are essentially as aggressive as possible in setting their margins without falling behind the $(k+1)$-th pool in desirability to potential delegators.  

In this vein, we focus on a delegation reward function that is specified according to the ``most competitive'' delegator, with the property that once such a delegator is identified, all less competitive delegators will be content with their choice in delegating. In order to proceed, we introduce new notation and terminology. 

\begin{definition}
For a given pool reward function, $\rho$, we let $\alpha: (\mathbb{R}^+)^2 \rightarrow \mathbb{R}^+$ be such that:
$$
\alpha(s,c) = \frac{\rho(s,0) - c}{s}.
$$ 
In other words, $\alpha(s,c)$ is the rewards per unit of stake that an individual with stake $s$ and pool operation cost $c$ obtains for opening a pool without external delegation (a solo pool). We call $\alpha(s,c)$ the threat of deviation of a delegator with stake $s$ and pool operation cost $c$. 
\end{definition}

For a given joint strategy profile, $\vect{p}$, we would ideally want to set delegation rewards to be the maximum threat of deviation among delegators, as this would achieve our desired goal of ensuring that all delegators do not have an incentive to deviate from delegating into becoming solo pools. The problem with this, though, is that the threat of deviation fundamentally depends on each delegate's private cost of pool operation. For this reason, we suppose that there is public knowledge regarding bounds on pool operation costs, so that $0 \leq c_{min} \leq c_i \leq c_{max}$ for any $i \in [n]$. With this in hand, we define the max-delegate rewards:

\begin{definition}[Max-delegate $r$]
For a given pool reward function, $\rho$, we let $r_M:\cA \times (\mathbb{R}^{+})^n \rightarrow \mathbb{R}^{+}$ be such that:
$$
r_M(\vect{p},\vect{s}) = \max_{i: p_i \in \cD_i} \alpha(s_i,c_{min})
$$
If $\{i \in [n] \mid p_i \in \cD_i\} = \emptyset$, then we let $r_M(\vect{p},\vect{s}) = 0$
\end{definition}

Since $\alpha$ is a decreasing function in $c$, it follows that for a given joint strategy profile, $\vect{p}$, every delegator will not increase their utility by becoming a solo pool operators under $r_M$. In what follows, we will consider pool reward functions $\rho$ with the natural property that $\alpha$ is monotonically increasing in $s$ as well (i.e. per-unit solo pool rewards are increasing in SPO pledge). In this case, we can express the max-delegate reward function in a more simple and useful fashion by making use of the following:

\begin{definition}
Suppose that $\cG$ is a delegation game and that we consider a joint strategy profile $\vect{p}$. We let $s^* = \max_{i: p_i \in \cD_i} s_i$ and call this quantity the pivotal delegation stake of $\vect{p}$. If $p_i \in \cD_i$ and $s_i = s^*$, then we also say that the player is a pivotal delegate in $\vect{p}$. 
\end{definition}

If the pool reward function, $\rho$, is such that $\alpha$ increases monotonically in $s$, then it follows that $r_M(\vect{p},\vect{s}) = \alpha(s^*,c_{min})$. 

\paragraph*{Putting Everything Together}
Going forward, we focus on uniform delegation games with max-delegate rewards such that per-unit solo pool delegation ($\alpha$) is monotonically increasing in pledge. We give this class of games a specific name as the main focus of this paper:

\begin{definition}[Proper delegation game]
\label{def:proper-delegation-game}
Suppose that $\cG$ is a uniform delegation game such that the following hold:
\begin{itemize}
\item The pool reward function, $\rho$ is such that per-unit solo SPO rewards, $\alpha(s,c)$, are monotonically increasing for $s \in [0,s_{max}]$, where $s_{max} = \max\{s_i\}$.
\item $\rho$ is capped and separable with $s_{max} < \tau$.
\item Delegation rewards are given by $r_M$, the max-delegate reward function.
\end{itemize}
Then we say that $\rho$ is a proper reward function and that $\cG$ is a proper delegation game. When we wish to be more specific regarding a given game, we use the notation $\cG(\rho,\tau,(\vect{s},\vect{c},\vect{\epsilon}))$ to specify the reward function and cap used, as well as the attributes of all players in the game. 
\end{definition}

\section{Equilibria in Proper Delegation Games}
\label{sec:PNE-in-proper-delegation-game}

In the previous section, we rigorously defined the class of proper delegation games which we focus on in this paper. This section provides sufficient conditions for a joint strategy profile to be a pure Nash equilibrium.

\subsection{Sufficient Conditions for Pure Nash Equilibrium (PNE)}

We use the shorthand $r = r_M(\vect{p},\vect{s})\in \mathbb{R}^+$ to refer to the per-unit reward for delegating to a feasible pool and we begin by providing multiple structural results related to the best responses agents may have in a proper delegation game.  

\subsubsection{Structural Results regarding Best Responses}

We begin by showing that infeasible pools are always suboptimal for both SPOs and delegators. 

\begin{lemma}[Feasible pool structural lemma]
\label{lemma:feasible-pool-structural-lemma}
Suppose that $\cG(\rho,\tau,(\vect{s},\vect{c},\vect{\epsilon}))$ is a proper delegation game and all agents are playing the joint strategy profile $\vect{p}$ where the $i$-th player is an SPO ($p_i = a_{SPO}$) for an infeasible pool with pledge $\lambda_i = s_i$ and external delegation $\beta_i \geq 0$. The following hold: 
\begin{itemize}
    \item Delegators to the infeasible pool obtain strictly more utility by delegating to feasible pools.
    \item The SPO earns strictly more utility by using their pledge to delegate to feasible pools.
\end{itemize}
\end{lemma}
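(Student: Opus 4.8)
The plan is to exploit the fact that in a proper delegation game, both the infeasible pool's rewards and the feasible-pool per-unit reward $r = r_M(\vect{p},\vect{s})$ are pinned down by the definitions, so the comparison reduces to a pointwise inequality. Let pool $i$ be the infeasible SPO pool with pledge $\lambda_i = s_i$ and external delegation $\beta_i \ge 0$; infeasibility says exactly that $\rho(\lambda_i,\beta_i) < \sigma_i\, r$ where $\sigma_i = \lambda_i + \beta_i$. By the uniform delegation reward structure (Definition~\ref{uniform}), a delegator who currently sends $x$ units to pool $i$ earns $\frac{x}{\sigma_i}\rho(\lambda_i,\beta_i)$ from that portion, whereas moving those $x$ units to any active feasible pool earns $r\cdot x$. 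Dividing the infeasibility inequality by $\sigma_i$ gives $\frac{\rho(\lambda_i,\beta_i)}{\sigma_i} < r$, so the per-unit reward strictly improves; multiplying by $x > 0$ gives the first bullet. Two small points must be checked here: that at least one feasible active pool exists to receive the redeployed stake (if not, the claim is vacuous, or one should note that some pool must be feasible — this needs a sentence, possibly invoking that $r_M$ is realized by a delegator whose threat of deviation is attainable), and that the delegator can indeed split their stake arbitrarily among other pools, which is exactly what $\cD_i$ allows.

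For the second bullet, I would compare the SPO's current utility with the utility they would obtain by switching to the action $d_i \in \cD_i$ that delegates all of $s_i$ to feasible pools. As an infeasible SPO, the agent's reward is $\frac{\lambda_i}{\sigma_i}\rho(\lambda_i,\beta_i)$ and their utility is this minus $c_i$ (from Equation~\eqref{eq:utilities-delegation-game} and Equation~\eqref{eq:constant-delegation-pool-rewards}). As a delegator placing all $s_i = \lambda_i$ stake on feasible pools, the utility is $r\cdot \lambda_i$ with no cost incurred. Since $\frac{\rho(\lambda_i,\beta_i)}{\sigma_i} < r$ we have $\frac{\lambda_i}{\sigma_i}\rho(\lambda_i,\beta_i) < r\lambda_i$, and subtracting $c_i > 0$ from the left side only widens the gap; hence the SPO strictly benefits from the deviation. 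Here one must be slightly careful that after the deviation the feasible pools the agent delegates into remain feasible and that $r$ does not change in a way that breaks the argument — but since we are evaluating a unilateral deviation against the fixed profile $\vect{p}_{-i}$, and $r_M(\vect{p}_{-i}, \cdot)$ is a max over the \emph{other} delegators' threats of deviation (plus possibly the deviating agent's own, which can only make $r$ larger or equal), the value of $r$ used for the feasible pools the agent delegates to is at least the $r$ that appeared in the infeasibility condition, so the inequality is preserved.

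The genuinely delicate step — and the one I expect to be the main obstacle — is handling the existence and stability of the feasible target pools, i.e. making the phrase "delegating to feasible pools" well-defined. One needs that the set of active feasible pools under $\vect{p}_{-i}$ is nonempty, and that the act of adding the deviating agent's stake to them does not push them over the cap $\tau$ or otherwise alter their feasibility status in a way that changes the per-unit reward below $r$. The separability and the assumption $s_{\max} < \tau$ should help control the cap issue, and feasibility of a pool is a property of $(\lambda_j,\beta_j)$ that can only be improved by additional delegation when $b(\lambda')>0$; I would need to verify this monotonicity explicitly or restrict attention to redeploying onto a single best feasible pool with enough headroom. Once that bookkeeping is settled, both bullets follow from the one-line inequality $\rho(\lambda_i,\beta_i)/\sigma_i < r$ obtained directly from the infeasibility hypothesis.
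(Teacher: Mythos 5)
Your proposal is correct and follows essentially the same route as the paper: divide the infeasibility inequality $\rho(\lambda_i,\beta_i) < r\sigma_i$ by $\sigma_i$ to compare per-unit rewards, apply it to a delegator's share $x$ for the first bullet and to the pledge $\lambda_i$ for the second, and note that the per-unit rate after deviation is at least $r$ because $r_M$ is a max that can only increase when the deviator joins the delegator set. If anything, you are more careful than the paper's own proof, which dispatches the change in $r$ with a one-line parenthetical and does not discuss the existence of feasible target pools or cap/feasibility stability under the added delegation.
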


\begin{proof}
The infeasibility of the pool implies that $\rho(\lambda_i,\beta_i) < r \sigma_i = r \cdot (\lambda_i + \beta_i)$ by definition, where we recall that $\sigma_i = \lambda_i + \beta_i$ is the total stake of the pool (including pledge and external delegation). Suppose that a delegator has delegated $x \leq \beta_i$ stake to the pool. The infeasibility of the pool also implies that said delegator's rewards amount to 
$$
\frac{x}{\sigma_i} \rho(\lambda_i,\beta_i) < \frac{x}{\sigma_i} r\sigma_i = rx.
$$
If the SPO becomes a delegator to a feasible pool, they will earn $r'x$, where $r' \geq r$ (since they could change the per-unit delegation if they are a pivotal delegate). This concludes the proof of the first statement.

As for the second statement, the infeasibility of the pool means that the SPO earns the following rewards: 
$$
\frac{\lambda_i}{\sigma_i} \rho(\lambda_i,\beta_i) < \frac{\lambda_i}{\sigma_i} r\sigma_i = r\lambda_i.
$$
The SPO stands to earn $r \lambda_i$ rewards if they instead delegate their stake used as a pool pledge to a feasible pool, thus proving the second statement. 
\end{proof}

We now prove lemmas regarding the best responses of agents who are idle, delegators, and SPOs, respectively.

\begin{lemma}[Idle best response]
    Consider a proper delegation game $\cG(\rho,\tau,(\vect{s},\vect{c},\vect{\epsilon}))$ and a joint strategy profile $\vect{p} = (a_I,\vect{p}_{-i})$ such that $i$-th player is idle. The $i$-th player's best response to $\vect{p}$ is either remaining idle or delegating to a feasible pool.   
\end{lemma}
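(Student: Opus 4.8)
The plan is to enumerate the possible actions available to the $i$-th player and argue that neither becoming an SPO nor delegating to an infeasible pool can be a strict improvement over the best among the two claimed options (remaining idle or delegating to a feasible pool). Since the action space is $\cA_i = \{a_I\} \cup \{a_{SPO}\} \cup \cD_i$, and any delegation profile $d_i \in \cD_i$ splits $s_i$ across pools, it suffices to handle three cases: (i) the player deviates to $a_{SPO}$; (ii) the player deviates to a delegation profile that places positive weight on at least one infeasible (or inactive) pool; (iii) the player deviates to a delegation profile supported entirely on feasible pools. Case (iii) is already one of the claimed best-response options, so nothing is to be shown there beyond noting that splitting among several feasible pools yields per-unit reward $r = r_M(\vect{p},\vect{s})$ on each unit (by Definition~\ref{uniform}), hence total utility $r \cdot s_i$, which is what a single feasible pool delegation also gives; so the ``delegating to a feasible pool'' option captures the best attainable delegation utility.

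For case (ii), I would invoke Lemma~\ref{lemma:feasible-pool-structural-lemma}: any stake the player routes to an infeasible pool earns strictly less than $r$ per unit, and stake routed to an inactive pool earns $0$ by the reward constraints in Definition~\ref{uniform}; rerouting that stake to a feasible pool earns at least $r$ per unit. Hence any delegation profile touching an infeasible or inactive pool is strictly dominated by a delegation profile supported on feasible pools, reducing case (ii) to case (iii). One subtlety: the per-unit feasible reward $r = r_M(\vect{p},\vect{s})$ is determined by the pivotal delegate, i.e.\ the delegator of largest stake; if the $i$-th player's own deviation to delegation changes the set of delegators, it could change $s^*$ and hence $r$. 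Since $s_i > 0$ and the player was previously idle, adding $i$ to the delegator pool can only (weakly) increase $s^*$, hence weakly increase $r$ — this only helps the delegation option, so the comparison is unaffected. (If instead $r$ could drop, we would need care, but it cannot.)

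For case (i), deviating to $a_{SPO}$ yields utility $\rho(s_i,0) - c_i$ if the resulting solo pool is feasible, and $\frac{\lambda_i}{\sigma_i}\rho(\lambda_i,\beta_i) - c_i$ otherwise; note that as a fresh SPO the player receives no external delegation from the fixed profile $\vect{p}_{-i}$, so $\beta_i = 0$ and $\sigma_i = \lambda_i = s_i$, giving utility $\rho(s_i,0) - c_i = s_i\,\alpha(s_i,c_i)$ in all cases (whether or not the solo pool is nominally feasible, the SPO payoff formulas coincide when $\beta_i=0$). Now compare against delegating to a feasible pool, which gives $s_i \cdot r = s_i \cdot r_M(\vect{p},\vect{s}) = s_i\,\alpha(s^*,c_{min})$, using that $\alpha$ is monotonically increasing in $s$ for a proper delegation game. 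Since the $i$-th player is currently a delegator candidate with $s_i \le s^*$ and $c_i \ge c_{min}$, monotonicity of $\alpha$ in $s$ together with the fact that $\alpha$ is decreasing in $c$ gives $\alpha(s_i,c_i) \le \alpha(s^*,c_{min}) = r$, so the SPO deviation earns at most $s_i r$ — no better than delegating to a feasible pool. The one gap to close carefully here is whether, after $i$ deviates to delegation, $s_i$ is still $\le s^*$: if $i$ itself becomes the pivotal delegate then $s^* = s_i$ and the inequality $\alpha(s_i,c_i)\le\alpha(s_i,c_{min})$ still holds since $\alpha$ decreases in $c$. Either way the SPO deviation is dominated.

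The main obstacle, and the one requiring the most care, is the interaction between the $i$-th player's choice and the value of $r = r_M(\vect{p},\vect{s})$: because $r$ is endogenous to the strategy profile (determined by the pivotal delegate), one must verify that switching $i$ into the delegator pool never shrinks $r$, and that the comparison of $\alpha(s_i,c_i)$ against $r$ remains valid in the boundary case where $i$ becomes the new pivotal delegate. Once this is pinned down, the rest is a routine case analysis leaning on Lemma~\ref{lemma:feasible-pool-structural-lemma} and the monotonicity properties built into the definition of a proper delegation game.
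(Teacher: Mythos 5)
Your proposal is correct and follows essentially the same route as the paper's proof: the core step in both is the chain $\alpha(s_i,c_i)\,s_i \leq \alpha(s_i,c_{min})\,s_i \leq r's_i$, where $r'$ is the per-unit delegation reward \emph{after} the deviation, which already accounts for the possibility that player $i$ becomes the pivotal delegate. Your additional case handling (ruling out delegation to infeasible or inactive pools via Lemma~\ref{lemma:feasible-pool-structural-lemma}) is a reasonable bit of extra thoroughness that the paper's proof leaves implicit, but it does not change the argument.
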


\begin{proof}
    This is a straightforward extension of definitions. We simply show that the deviation where the $i$-th player becomes an SPO is weakly dominated by the deviation where the $i$-th agent becomes a delegator. The deviation where the agent becomes an SPO is unilateral, hence the pool they create forcibly lacks external delegation. As such, their solo pool utility is given by $\alpha(s_i,c_i) \cdot s_i$. On the other hand, let $\vect{p}' = (p_i',\vect{p}_{-i})$ be the deviation where the $i$-th player delegates to feasible pools, resulting in per-unit delegation rewards $r'$. By definition, $r' \geq \alpha(s_i,c_{min})$, as it is the maximum value of $\alpha(s_j,c_{min})$ among the agents who delegate, which includes the $i$-th agent. Since the $i$-th player delegates to feasible pools in $\vect{p}'$, it follows that their utility is given by $r' s_i$ in the deviation. We have the following strings of inequalities: 

    \begin{equation} 
    \begin{split}
    \alpha(s_i,c_i) \cdot s_i & \leq  \alpha(s_i,c_{min}) \cdot s_i \\
     & \leq r' s_i  
    \end{split}
    \end{equation}
    where we have additionally made use of the fact that $\alpha$ is decreasing in its second argument. The claim follows. 
\end{proof}

\begin{lemma}
Suppose that $\cG(\rho,\tau,(\vect{s},\vect{c},\vect{\epsilon}))$ is a proper delegation game. For any joint strategy profile $\vect{p}$, delegates to feasible pools cannot benefit from deviating to becoming SPOs.
\end{lemma}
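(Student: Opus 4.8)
The plan is to fix a player $i$ who, under the profile $\vect{p}$, is a delegator whose entire stake is delegated to feasible pools, and to show that the unilateral deviation $p_i' = a_{SPO}$ cannot raise $i$'s utility. Write $r = r_M(\vect{p},\vect{s})$ for the per‑unit delegation reward. By \eqref{eq:constant-delegation-delegate-rewards}--\eqref{eq:constant-delegation-pool-rewards} we have $u_i(\vect{p}) = r\, s_i$, and since $i$ is itself one of the delegators in $\vect{p}$, the definition of $r_M$ as a maximum over delegators immediately gives $r \geq \alpha(s_i,c_{min})$.

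Next I would analyze the deviation $\vect{p}' = (a_{SPO},\vect{p}_{-i})$. Exactly as argued for the Idle best response lemma, a unilateral deviation to operating a pool cannot attract external delegation, so under $\vect{p}'$ pool $i$ has $\lambda_i = s_i$ and $\beta_i = 0$ (hence $\sigma_i = s_i$). Because $\beta_i = 0$, the two SPO branches of \eqref{eq:constant-delegation-pool-rewards} collapse to the same value: whether or not pool $i$ is feasible, $R_i(\vect{p}') = \rho(s_i,0)$ (if feasible, $\rho(s_i,0) - r'\cdot 0$; if infeasible, $\tfrac{\lambda_i}{\sigma_i}\rho(s_i,0) = \rho(s_i,0)$). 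Consequently $u_i(\vect{p}') = \rho(s_i,0) - c_i = s_i\,\alpha(s_i,c_i)$ by the definition of the threat of deviation $\alpha$; separability and $s_i \leq s_{max} < \tau$ could additionally be invoked to write $\rho(s_i,0) = a(s_i)$, but this is not needed.

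Finally I would compare the two utilities. Since $\alpha(\cdot,\cdot)$ is decreasing in its second argument and $c_i \geq c_{min}$, we get $\alpha(s_i,c_i) \leq \alpha(s_i,c_{min}) \leq r$, and therefore $u_i(\vect{p}') = s_i\,\alpha(s_i,c_i) \leq s_i\, r = u_i(\vect{p})$, which is precisely the claim.

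The only delicate point --- and the step I expect to carry the real weight --- is the one shared with the Idle best response lemma: that the newly activated pool $i$ carries no external delegation, i.e. $\beta_i = 0$. Granting this (a single player's deviation leaves every other player's delegation profile unchanged, and under $\vect{p}$ pool $i$ was inactive), the SPO reward is pinned to the solo value $\rho(s_i,0)$ independently of feasibility, and the rest is a one‑line chain of inequalities through the definitions of $r_M$ and $\alpha$. If one instead wished to accommodate stake that had been (pointlessly) delegated to the inactive pool $i$ and now activates, the infeasible case would still go through verbatim --- $R_i(\vect{p}') = \tfrac{\lambda_i}{\sigma_i}\rho(\lambda_i,\beta_i) < r'\lambda_i \leq r\, s_i$ using infeasibility and $r' = r_M(\vect{p}',\vect{s}) \leq r$ --- while the feasible case would additionally require separability to control the external‑delegation reward component against $r'$.
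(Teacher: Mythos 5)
Your proof is correct and follows essentially the same route as the paper's: both reduce the unilateral deviation to a solo pool earning per-unit reward $\alpha(s_i,c_i)$ and bound this by $r$ using the monotonicity of $\alpha$ in its second argument together with the definition of $r_M$ as a maximum over delegators. Your extra care in justifying that the newly activated pool has $\beta_i = 0$ (so its feasibility is immaterial) is detail the paper leaves implicit, but it does not change the argument.
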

\begin{proof}
This is an easy consequence of the fact that $\alpha$ is monotonically increasing in pledge and monotonically decreasing in pool operation cost. We recall that $s^*$ is the pivotal delegate stake for $\vect{p}$. Suppose $p_i \in \cD_i$, where the $i$-th player with stake $s_i$ and pool operation cost $c_i$ delegates to a feasible pool in $\vect{p}$. Per-unit rewards for this delegate are $r = \alpha(s^*,c_{min})$ where $s_i \leq s^*$. Monotonicity gives: 
$$
\alpha(s_i,c_i) \leq \alpha(s^*,c_i) \leq \alpha(s^*,c_{min}) = r
$$
and the per-unit reward the delegate can earn from becoming a solo SPO is in fact $\alpha(s_i,c_i)$.

\end{proof}

In what follows, we consider an SPO with pledge, pool operation cost, and idle utility given by $(\lambda,c,\epsilon)$. Moreover, we continue to let $r$ be per-unit rewards for delegating to feasible pools. We call the following quantity the ``Gap'' of the given SPO:
$$
G(\lambda,c,\epsilon,r) = \max \{\epsilon + c - a(\lambda), [r - \alpha(\lambda,c_{min})]^+ \cdot \lambda + (c - c_{min})  \} > 0,
$$
where we use the notational shorthand $[x]^+ = \max\{x,0\}$. Furthermore, when the context is clear, we simply use $G$ to refer to the gap of an SPO. 
\begin{lemma}
\label{lemma:SPO-content}
Suppose that an SPO has $s$ stake, pool operation cost $c$, and idle utility $\epsilon$. Additionally suppose that they operate a pool with pledge $\lambda = s$ and external delegation $\beta$. The SPO cannot benefit from unilaterally deviating from pool operation (by either becoming idle, becoming a delegator or opening a new pool) if and only if:  
$$
b(\lambda) \beta' - r\beta \geq G(\lambda,c,r,\epsilon) > 0
$$
\end{lemma}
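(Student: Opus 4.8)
The plan is to characterize when an SPO with pledge $\lambda = s$, cost $c$, idle utility $\epsilon$, and external delegation $\beta$ has no profitable unilateral deviation, by enumerating the three possible deviations and showing that the stated inequality is exactly the condition that rules out all of them simultaneously. First I would compute the SPO's current utility. Since the pool must be feasible for the SPO's utility to take the "feasible" branch of \eqref{eq:constant-delegation-pool-rewards} (an infeasible pool is handled by Lemma~\ref{lemma:feasible-pool-structural-lemma} and is strictly dominated, so we may restrict attention to the feasible case), the SPO earns $\rho(\lambda,\beta) - r\beta - c$. Using separability and $s_{max} < \tau$, we have $\lambda' = \lambda$ and $\beta' = \min\{\tau - \lambda, \beta\}$, so $\rho(\lambda,\beta) = a(\lambda) + b(\lambda)\beta'$, giving current utility $a(\lambda) + b(\lambda)\beta' - r\beta - c$.

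Next I would handle each deviation in turn and extract the inequality it imposes:
\begin{itemize}
    \item \textbf{Deviating to idle} yields utility $\epsilon$. No benefit means $a(\lambda) + b(\lambda)\beta' - r\beta - c \geq \epsilon$, i.e. $b(\lambda)\beta' - r\beta \geq \epsilon + c - a(\lambda)$.
    \item \textbf{Deviating to delegator:} since the deviation is unilateral, the pool $i$ becomes inactive and the remaining pools are unchanged; the deviating player delegates their stake $\lambda$. One must be careful here about whether the player becomes the \emph{pivotal} delegate: after deviation the new per-unit rate is $r' = \alpha(\max\{s^*, \lambda\}, c_{min}) \geq r$ if feasible pools still exist, or the player could form a solo pool otherwise. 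In the worst case (from the no-deviation standpoint) the deviating player earns $\max\{r,\alpha(\lambda,c_{min})\}\cdot \lambda = (r + [\alpha(\lambda,c_{min}) - r]^+)\lambda$. Actually the relevant deviation value to beat is the larger of "delegate to an existing feasible pool" ($r\lambda$, or $r'\lambda$ if pivotal) and "open a new solo pool" ($\alpha(\lambda,c)\lambda$); combining, no benefit requires $a(\lambda) + b(\lambda)\beta' - r\beta - c \geq (r + [\alpha(\lambda,c_{min}) - r]^+)\lambda$ — wait, one should use $\alpha(\lambda,c_{min})$ only if the player genuinely becomes pivotal, but since $[\,\cdot\,]^+$ already collapses to $0$ when $\alpha(\lambda,c_{min}) \le r$, the clean bound is $b(\lambda)\beta' - r\beta \geq [r - \alpha(\lambda,c_{min})]^+\lambda + \cdots$; I will need to chase the signs carefully, noting $a(\lambda) = \alpha(\lambda, c_{min})\lambda + c_{min}$ so that $a(\lambda) - c = \alpha(\lambda,c_{min})\lambda - (c - c_{min})$. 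Substituting gives $b(\lambda)\beta' - r\beta \geq [r - \alpha(\lambda,c_{min})]^+\lambda + (c - c_{min})$, matching the second term of $G$.
    \item \textbf{Opening a new pool} (a different pool than the current one): unilaterally this again has no external delegation, so it reduces to the solo-pool value $\alpha(\lambda,c)\lambda$, which is dominated by the delegator deviation exactly as in the Idle-best-response lemma; hence it imposes no new constraint beyond the above.
\end{itemize}

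Finally I would assemble the pieces: the SPO has no profitable deviation iff \emph{both} inequalities hold, which is precisely $b(\lambda)\beta' - r\beta \geq \max\{\epsilon + c - a(\lambda),\ [r-\alpha(\lambda,c_{min})]^+\lambda + (c-c_{min})\} = G(\lambda,c,r,\epsilon)$. The strict positivity $G > 0$ should follow from the standing assumptions (in particular $c \geq c_{min}$ forces the second term nonnegative and $\epsilon > 0$ together with appropriate bounds make at least one term strictly positive — I'd verify this from how $G$ is introduced just before the lemma, where it is asserted $G>0$). The main obstacle I anticipate is the delegator-deviation case: getting the pivotal-delegate bookkeeping right (whether the deviator raises $r$ to $r'$, and reconciling that with the $[r - \alpha(\lambda,c_{min})]^+$ form), and correctly translating $a(\lambda)$ via the identity $a(\lambda) = \alpha(\lambda,c_{min})\lambda + c_{min}$ so that the algebra lands exactly on the stated $G$. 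The idle and new-pool cases are routine.
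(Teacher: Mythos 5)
Your proposal matches the paper's proof in essentially every step: both compute the SPO's current utility as $a(\lambda)+b(\lambda)\beta'-r\beta-c$, dispose of the solo-pool/new-pool deviation by noting it is dominated by delegating (via $\alpha(\lambda,c)\le\alpha(\lambda,c_{min})$), and reduce the remaining idle and delegator deviations to the two inequalities whose maximum is exactly $G$; your single expression $\max\{r,\alpha(\lambda,c_{min})\}\lambda$ for the delegation payoff is just a unified form of the paper's two cases $\lambda\le s^*$ and $\lambda>s^*$. The only difference is your explicit reduction to the feasible-pool case via Lemma~\ref{lemma:feasible-pool-structural-lemma}, a point the paper leaves implicit.
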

\begin{proof}
We start by providing algebraic conditions for the SPO to prefer operating the pool to becoming idle. The utility for operating a pool is given by $u^P = a(\lambda) + b(\lambda)\beta' -r\beta - c$, whereas the utility for remaining idle is given by $u^I = \epsilon$. It is thus clear that $u^P \geq u^I$ if and only if:
$$
b(\lambda)\beta' - r\beta \geq \epsilon + c - a(\lambda)
$$

Now we provide algebraic conditions for an SPO to prefer operating the pool to becoming a delegator or a solo pool. To begin, we show that becoming a delegator is always a preferable deviation to shedding external delegation and becoming a solo pool. By becoming a delegator, the per-unit reward of the agent is at least $\alpha(\lambda,c_{min})$ by definition of $r_M$. If the agent becomes a solo pool operator, however, their per-unit reward is given by $\alpha(\lambda,c) \leq \alpha(\lambda,c_{min})$. With this in hand, we only consider deviations consisting of becoming a delegator going forward. In what follows we will show that an SPO prefers running their pool over becoming a delegator if and only if: 
$$
b(\lambda)\beta' - r\beta \geq [r - \alpha(\lambda,c_{min})]^+ \cdot \lambda + (c - c_{min}).
$$
Once we prove this constraint the lemma follows, as the gap is the larger value of both of these constraints on $b(\lambda)\beta' - r\beta$. 

There are two relevant cases when considering a deviating SPO depending on whether $\lambda \leq s^*$ where we recall that $s^*$ is the pivotal stake of $\vect{p}$. 

\paragraph*{Case 1: $\lambda \leq s^*$.}
The utility the SPO has from operating the pool as is is given by:
$$
u^P = a(\lambda) + b(\lambda) \beta' -r\beta - c
$$
Whereas the utility for delegating is given by:
$$
u^D = r\lambda  = \alpha(s^*,c_{min})\lambda = \left( \frac{a(s^*) - c_{min}}{s^*} \right) \lambda,
$$
where we have used the fact that $\lambda \leq s^*$ in the fact that the same $r$ is the per-unit delegation reward after deviating. The SPO prefers the status quo if and only if $u^P \geq u^D$. If we re-arrange said inequality, we obtain the desired equivalent condition: 
\begin{equation} 
\begin{split}
u^P & \geq u^D \\
a(\lambda) + b(\lambda) \beta' -r\beta - c & \geq r\lambda \\
b(\lambda) \beta' -r\beta & \geq r\lambda - a(\lambda) + c \\
b(\lambda) \beta' -r\beta & \geq r\lambda - (a(\lambda) - c_{min}) + c - c_{min} \\
b(\lambda) \beta' -r\beta & \geq r\lambda - \alpha(\lambda,c_{min})\lambda + c - c_{min} \\
b(\lambda) \beta' -r\beta & \geq (r - \alpha(\lambda,c_{min})) \cdot \lambda + c - c_{min} \\
b(\lambda) \beta' -r\beta & \geq [r - \alpha(\lambda,c_{min})]^+ \cdot \lambda + (c - c_{min})\\
\end{split}
\end{equation}
In the final line we use the fact that $\lambda \leq s^*$ implies that $\alpha(\lambda,c_{min}) \leq r$ due to the definition of $r$ and the monotonicity of $\alpha$ in its first argument.

\paragraph*{Case 2: $\lambda > s^*$}
The utility the SPO obtains from operating the pool as is is given by:
$$
u^P = a(\lambda) + b(\lambda) \beta' -r\beta - c
$$
Whereas the utility for delegating is given by:
$$
u^D = r\lambda = \alpha(\lambda,c_{min})\lambda = a(\lambda) - c_{min} ,
$$
where we have used the fact that $\lambda > s^*$ in the fact that the same $r = \alpha(\lambda,c_{min})$ is the per-unit delegation reward after deviating. The SPO prefers the status quo if and only if $u^P \geq u^D$. If we re-arrange said inequality, we obtain the desired equivalent condition: 
\begin{equation} 
\begin{split}
u^P & \geq u^D \\
a(\lambda) + b(\lambda) \beta' -r\beta - c & \geq a(\lambda) - c_{min} \\
b(\lambda) \beta' -r\beta & \geq  c - c_{min} \\
b(\lambda) \beta' -r\beta & \geq  [r - \alpha(\lambda,c_{min})]^+ \cdot \lambda + (c - c_{min})
\end{split}
\end{equation}
In the final line, we used the fact that $\lambda > s^*$ implies that $\alpha(\lambda,c_{min}) = r$.

\end{proof}

\subsubsection{Pool Deficit and Capacity}

With the previous lemma in hand, we precisely characterize at what values of external delegation an SPO prefers to maintain their pool (rather than becoming a delegator or abandoning their given external delegation for a solo pool). To do so, we define the following important quantities:
\begin{definition}[Pool Deficit/Capacity]
    Consider a proper pool delegation game given by $\cG(\rho,\tau,(\vect{s},\vect{c},\vect{\epsilon}))$ where the pool reward function is given by $\rho(\lambda,\beta) = a(\lambda) + b(\lambda)\beta'$. Let $\vect{p}$ be a joint strategy profile of $\cG$ such that per unit delegation reward is given by $r$ and such that the $i$-th player is an SPO with pledge $\lambda_i < \tau$ and pool operation cost $c_i$.     
    We let $\beta^-_i = \beta^-(\lambda_i,c_i,\epsilon_i,r)$ and $\beta^+_i = \beta^+(\lambda_i,c_i,\epsilon_i,r)$ denote the deficit and capacity, respectively, of the pool run by the $i$-th player as an SPO. The quantities are defined as follows:

\begin{equation}
\beta^-(\lambda_i,c_i,\epsilon_i,r)=
    \begin{cases}
        \frac{G(\lambda_i,c_i,\epsilon_i,r)}{b(\lambda_i) - r} & \text{if } (b(\lambda_i) - r)(\tau - \lambda_i) \geq G(\lambda_i,c_i,\epsilon_i,r)\\
        \infty & \text{otherwise }
    \end{cases}
\end{equation}

\begin{equation}
\beta^+(\lambda_i,c_i,\epsilon_i,r)=
    \begin{cases}
        \frac{b(\lambda_i)(\tau - \lambda_i) - G(\lambda_i,c_i,\epsilon_i,r)}{r} & \text{if } (b(\lambda_i) - r)(\tau - \lambda_i) \geq G(\lambda_i,c_i,\epsilon_i,r)\\
        -\infty & \text{otherwise }
    \end{cases}
\end{equation}
        
\end{definition}

 We allow $\beta^-_i$ and $\beta^+_i$ to take infinite values to represent scenarios where no amount of external delegation can prevent an SPO from deviating from stake pool operation. The following lemma formalizes how pool deficit and capacity serve as lower and upper bounds to the external delegation an SPO can bear while being content as an SPO. 

\begin{lemma}
\label{lemma:deficit-capacity-stability-pool}
Suppose that the $i$-th player is an SPO with pledge, $\lambda_i$, and pool operation cost, $c_i$, and that they are running a feasible pool under the joint strategy profile $\vect{p}$ with external delegation $\beta_i$. Furthermore, suppose that per-unit delegation rewards in $\vect{p}$ are given by $r$. The $i$-th player prefers operating their pool to becoming idle or becoming a delegator if and only if:
$$
0 < \beta^-_i\leq \beta_i \leq \beta^+_i
$$ 
\end{lemma}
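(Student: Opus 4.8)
The plan is to reduce Lemma~\ref{lemma:deficit-capacity-stability-pool} to Lemma~\ref{lemma:SPO-content}, which already gives the clean equivalent condition $b(\lambda_i)\beta_i' - r\beta_i \geq G(\lambda_i,c_i,\epsilon_i,r) > 0$ for an SPO to be content. The only real work is translating the left-hand side, which involves the capped quantity $\beta_i' = \min\{\tau - \lambda_i, \beta_i\}$, into honest linear inequalities in $\beta_i$, and then solving those inequalities. So first I would split into the two cases that the cap forces: (i) $\beta_i \leq \tau - \lambda_i$, so $\beta_i' = \beta_i$ and the content condition reads $(b(\lambda_i) - r)\beta_i \geq G$; and (ii) $\beta_i > \tau - \lambda_i$, so $\beta_i' = \tau - \lambda_i$ and the condition reads $b(\lambda_i)(\tau - \lambda_i) - r\beta_i \geq G$, i.e. $\beta_i \leq \frac{b(\lambda_i)(\tau-\lambda_i) - G}{r}$.

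Next I would observe that in case (i), since $G > 0$, being content requires $b(\lambda_i) - r > 0$ and $\beta_i \geq \frac{G}{b(\lambda_i)-r} =: \beta^-_i$ (using the defined formula, valid precisely when $(b(\lambda_i)-r)(\tau-\lambda_i) \geq G$, which is exactly the condition under which $\beta^-_i \leq \tau - \lambda_i$, so the constraint is consistent with the case hypothesis). In case (ii), being content requires $\beta_i \leq \frac{b(\lambda_i)(\tau - \lambda_i) - G}{r} =: \beta^+_i$, and one checks $\beta^+_i \geq \tau - \lambda_i$ is equivalent to the same inequality $b(\lambda_i)(\tau - \lambda_i) - G \geq r(\tau - \lambda_i)$, i.e. $(b(\lambda_i)-r)(\tau-\lambda_i)\geq G$. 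Combining: when $(b(\lambda_i) - r)(\tau - \lambda_i) \geq G$, the set of content-inducing $\beta_i$ is exactly $[\beta^-_i, \tau-\lambda_i] \cup (\tau - \lambda_i, \beta^+_i] = [\beta^-_i, \beta^+_i]$, which is what the lemma claims; when the inequality fails, no $\beta_i$ works in either case, matching the convention $\beta^-_i = \infty$, $\beta^+_i = -\infty$. I would also note $\beta^-_i > 0$ follows from $G > 0$ and $b(\lambda_i) - r > 0$.

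The main subtlety — and the step I expect to need the most care — is making sure the two sub-intervals glue into a single interval without a gap at the seam $\beta_i = \tau - \lambda_i$: this requires checking that the case-(i) upper endpoint ($\tau - \lambda_i$) and case-(ii) lower endpoint coincide and that $\beta^-_i \leq \tau - \lambda_i \leq \beta^+_i$ all hold simultaneously under the single hypothesis $(b(\lambda_i)-r)(\tau-\lambda_i)\geq G$. Here the hypothesis of the lemma that the pool is \emph{feasible} should also be invoked to guarantee $b(\lambda_i) - r > 0$ is not vacuously in tension with feasibility, and to rule out the degenerate possibility $b(\lambda_i) \le r$ from the outset in case (i). The rest is routine algebraic rearrangement directly parallel to the manipulations already carried out in the proof of Lemma~\ref{lemma:SPO-content}.
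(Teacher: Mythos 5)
Your proposal is correct and follows essentially the same route as the paper: both reduce to Lemma~\ref{lemma:SPO-content} and analyze the piecewise-linear function $b(\lambda_i)\beta_i' - r\beta_i$ on the two regimes of the cap, verifying that the valid set is $[\beta^-_i,\beta^+_i]$ exactly when $(b(\lambda_i)-r)(\tau-\lambda_i)\geq G$ and empty otherwise, matching the $\infty/-\infty$ conventions. One small caveat: feasibility does \emph{not} imply $b(\lambda_i) > r$ and need not be invoked for that purpose --- the case $b(\lambda_i)\leq r$ is already correctly disposed of by your (and the paper's) empty-set argument, so that aside should be dropped.
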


\begin{proof}
The result follows from unpacking $b(\lambda_i) \beta'_i - r\beta_i$ as a piecewise linear expression (due to the piecewise linear nature of $\beta'_i$ resulting from the pool cap $\tau$) in Lemma \ref{lemma:SPO-content} which we recall says that the SPO cannot benefit from deviating from operating their pool if the following holds:
$$
b(\lambda_i)\beta'_i - r\beta_i \geq G(\lambda_i,c_i,\epsilon_i,r) > 0,
$$
where $\beta'_i = \min\{\beta_i,\tau- \lambda_i\}$. For the sake of this proof, we let $h(\beta_i) = b(\lambda_i)\beta'_i - r\beta_i$ and express it piecewise:
\begin{equation}
h(\beta_i)=
    \begin{cases}
        (b(\lambda_i) - r) \beta_i & \text{if } \beta_i \leq \tau - \lambda_i \\
        b(\lambda_i)(\tau - \lambda_i) - r\beta_i & \text{if } \beta_i > \tau - \lambda_i
    \end{cases}
\end{equation}
Considering the gap, $G$, as a value which is independent of $\beta_i$, the condition we seek for an SPO to not deviate is thus:
$$
h(\beta_i) \geq G > 0
$$

We recall that $b(\lambda_i) \geq 0$ for all values of $\lambda_i$ (SPOs never pay the system to open a pool), hence if $(b(\lambda_i) - r) < 0 < G$, then $h(\beta_i)$ is in fact monotonically decreasing in $\beta_i$. Thus, there will be no values of $\beta_i$ such that $h(\beta_i) > G$, which from Lemma \ref{lemma:SPO-content}, implies the SPO will prefer to deviate from operating the pool. Moreover, we notice that $h(\tau - \lambda_i) = (b(\lambda_i) - r)(\tau - \lambda_i) < 0$, hence the expressions for deficit and capacity of the pool give us $\beta^-_i = \infty$ and $\beta^+_i = -\infty$, which also reflects the fact that there exist no value of $\beta_i$ such that $\beta^-_i \leq \beta_i \leq \beta^+_i$. 

When $(b(\lambda_i) - r) > 0$, it follows that the piecewise linear $h(\beta_i)$ is strictly increasing for $\beta_i \in [0,\tau - \lambda_i]$ and strictly decreasing for $\beta_i > \tau - \lambda_i$. As a consequence, the global maximum of $h(\beta_i)$ is at $\beta_i = (\tau - \lambda_i)$. If $h(\tau - \lambda_i) < G$, then $h(\beta_i) \leq h(\tau - \lambda_i) < G$ for all $\beta_i$, hence no amount of external delegation can prevent the SPO from deviating. Moreover, the expression for deficit and capacity are such that once more $\beta^-_i = \infty$ and $\beta^+_i = -\infty$, which also reflect the fact that there exist no value of $\beta_i$ such that $\beta^-_i \leq \beta_i \leq \beta^+_i$.

Finally, if $h(\tau - \lambda_i) > G$, there do exist $\beta_i$ values such that $h(\beta_i) > G$ which prevent the SPO from deviating to delegation or solo pool operation. The expression for $\beta^-_i$ and $\beta^+_i$ have been chosen such that $\beta^-_i \leq \beta^+_i$ and $h(\beta^-_i) = h(\beta^+_i) = G$, where $0 < \beta^-_i$ due to the fact that $G > 0$. Given the piecewise linear nature of $h$, it follows that for $\beta_i \in [\beta^-_i,\beta^+_i]$ we have $h(\beta_i) > G$ as desired.  

\begin{observation}
    Notice that $\beta^-_i \leq \beta_i \leq \beta^+_i$ also implies that the pool opened by the $i$-th player as an SPO is feasible. If this were not the case, then by Lemma \ref{lemma:feasible-pool-structural-lemma} the SPO would prefer delegation, which is not possible due to Lemma \ref{lemma:deficit-capacity-stability-pool}. 
\end{observation}

\end{proof}

\subsubsection{Putting Everything Together}
We summarize the collection of results from this section as a theorem that characterizes useful sufficient conditions for a joint strategy profile, $\vect{p}$, to be a pure Nash equilibrium in a proper delegation game. 

\begin{theorem}\label{thm:PNE-sufficient}
    Suppose that $\cG(\rho,\tau,(\vect{s},\vect{c},\vect{\epsilon}))$ is a proper delegation game. Consider a joint strategy profile $\vect{p}$ that results in per-unit delegation rewards, $r$. The following are sufficient conditions for $\vect{p}$ to be a pure Nash equilibrium:
    \begin{itemize}
        \item Delegators only delegate to feasible pools.
        \item If the $i$-th agent is not idle, they earn at least $\epsilon_i$ utility.
        \item If the $i$-th agent is idle, their delegation utility is at most $\epsilon_i$.
        \item If the $i$-th agent is an SPO with pledge $\lambda_i = s_i < \tau$ and external delegation $\beta_i$, then $\beta^-_i \leq \beta_i \leq \beta^+_i$.
    \end{itemize}
\end{theorem}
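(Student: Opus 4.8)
The plan is to fix a joint strategy profile $\vect{p}$ satisfying the four listed conditions and show it admits no profitable unilateral deviation, arguing separately according to the role ($a_I$, $a_{SPO}$, or some $d_i \in \cD_i$) that an agent $i$ plays in $\vect{p}$. Throughout, write $r = r_M(\vect{p},\vect{s})$ for the per-unit delegation reward (so $r = \alpha(s^*,c_{min})$ for the pivotal delegation stake $s^*$, since the game is proper). The role of the structural results established above is precisely to collapse each role's best response to a tiny set of candidate deviations, so most of the proof is bookkeeping about which lemma to cite; the one genuinely delicate ingredient is tracking how a unilateral deviation can shift $s^*$ (hence $r$) and flip pools between feasible and infeasible.

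\textbf{Idle and delegator agents.} If $p_i = a_I$, the Idle best response lemma says $i$'s best response is to stay idle or delegate to feasible pools, and condition~3 states the latter yields at most $\epsilon_i = u_i(\vect{p})$, so staying idle is a best response. If $p_i = d_i \in \cD_i$, then condition~1 gives $u_i(\vect{p}) = r s_i$, and I would rule out the three deviation types in turn: (a) any deviation keeping $i$ a delegator leaves the delegator set, hence $s^*$ and $r$, unchanged, while by Definition~\ref{uniform} each unit of delegated stake earns at most $r$ (exactly $r$ at a feasible pool, strictly less at an infeasible one, $0$ at an inactive one), so the deviation yields at most $r s_i$; (b) deviating to $a_I$ yields $\epsilon_i \le r s_i$ by condition~2; (c) deviating to $a_{SPO}$ is unprofitable by the lemma asserting that delegates to feasible pools cannot benefit from becoming SPOs.

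\textbf{SPO agents.} If $p_i = a_{SPO}$ with pledge $\lambda_i = s_i$ and external delegation $\beta_i$, then $\lambda_i \le s_{max} < \tau$ since $\rho$ is capped and separable, so condition~4 applies. Expanding $\beta^-_i$ and $\beta^+_i$ exactly as in the proof of Lemma~\ref{lemma:deficit-capacity-stability-pool}, the bound $0 < \beta^-_i \le \beta_i \le \beta^+_i$ is equivalent to $b(\lambda_i)\beta'_i - r\beta_i \ge G(\lambda_i,c_i,\epsilon_i,r) > 0$, which by Lemma~\ref{lemma:SPO-content} is precisely the statement that $i$ gains nothing by deviating to idleness or to a delegation profile (the only deviations available from $a_{SPO}$, with becoming a solo pool dominated by delegation inside that lemma, and with the case split on $\lambda_i \le s^*$ versus $\lambda_i > s^*$ already absorbing the possibility that the deviating SPO becomes pivotal and raises $r$). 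Feasibility of $i$'s pool --- needed for consistency with condition~1 and for the reward formulas used above --- then follows as in the Observation after Lemma~\ref{lemma:deficit-capacity-stability-pool}: were the pool infeasible, Lemma~\ref{lemma:feasible-pool-structural-lemma} would make delegation a strict improvement, contradicting what was just shown. Having covered all three roles, $\vect{p}$ is a pure Nash equilibrium.

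\textbf{Main obstacle.} I expect the delegator case to be the delicate one, specifically the need to be sure that no delegator deviation lets them collect more than $r$ per unit of stake. Two easy-to-overlook facts make this go through: reshuffling delegation among pools cannot change $s^*$ and hence cannot change $r$, and the uniform-delegation reward function caps per-unit earnings at $r$ even when the added delegation flips a target pool from infeasible to feasible. Once these are pinned down, the remaining cases reduce to direct invocations of the section's structural lemmas.
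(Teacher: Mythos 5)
Your proposal is correct and follows essentially the same route as the paper, which presents Theorem~\ref{thm:PNE-sufficient} as a direct summary of the preceding structural lemmas (the idle best response lemma, the no-benefit-from-becoming-SPO lemma for delegators, and Lemmas~\ref{lemma:SPO-content} and~\ref{lemma:deficit-capacity-stability-pool} with the subsequent observation for SPOs); your case analysis by role is precisely the intended assembly of those pieces. The two points you flag as delicate --- that a delegator's reshuffling leaves $s^*$ and hence $r$ fixed while per-unit earnings are capped at $r$, and that the case split on $\lambda_i$ versus $s^*$ in Lemma~\ref{lemma:SPO-content} absorbs the shift in $r$ when an SPO deviates to delegating --- are exactly the right things to pin down and are handled correctly.
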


\section{The Bayesian Setting}
\label{sec:bayesian-setting}

In a proper delegation game, we let the {\it type} of the $i$-th player consist of their stake, pool operation cost and idle utility: $(s_i,c_i,\epsilon_i)$. In a Bayesian framework we independently draw player types from a common known prior distribution $\cX$ and subsequently have them play a proper delegation game. 

\begin{definition}[Bayesian Proper Delegation Game]
    A Bayesian proper delegation game requires four inputs: 
    \begin{itemize}
        \item A proper reward function: $\rho$
        \item A pool cap: $\tau$
        \item A type distribution: $\cX$
        \item The number of agents to be drawn from the type distribution: $n > 0$
    \end{itemize}
    For such a game, player types are first drawn independently via $(\vect{s},\vect{c},\vect{\epsilon}) \sim \cX^n$, and they subsequently play the proper delegation game $\cG(\rho,\tau,(\vect{s},\vect{c},\vect{\epsilon}))$. We use the notation $\cG(\rho,\tau, \cX,n)$ to denote a specific Bayesian proper delegation game.
\end{definition}

In Bayesian games one typically studies {\it ex ante} player strategies that consist of mappings from player types to actions taken. Agents in a proper delegation games however have a rich (infinite in fact) family of actions at their disposal. Moreover, as mentioned in the introduction, we are ultimately interested in the high level decision taken by an agent whether to be an SPO, a delegator or idle. For this reason, we introduce the notion of a partial ex ante strategy which will be an important object of study of our paper. 

\begin{definition}[Partial Ex Ante Strategy]\label{def:partial-ex-ante}
    A partial ex ante strategy for a Bayesian delegation game is a function $f: \mathbb{R}^3 \rightarrow \{0,1\}$ which dictates which players become SPOs. Under $f$, a player with type $(s,c,\epsilon)$ is an SPO if and only if $f(s,c,\epsilon) = 1$.
\end{definition}

The reason we call such ex-ante strategies {\it partial} is due to the fact that after drawing player types, there are multiple pure strategy profiles of the ex post proper delegation game which are consistent with $f$. For a given draw of player types, $(\vect{s},\vect{c},\vect{\epsilon})$, we let $\cA_f(\vect{s},\vect{c},\vect{\epsilon})$ denote the set of pure strategy profiles of the ex post proper delegation game, $\cG(\rho,\tau,(\vect{s},\vect{c},\vect{\epsilon}))$, that are consistent with $f$. In other words, $\vect{p} \in \cA_f(\vect{s},\vect{c},\vect{\epsilon})$ when $p_i = a_{SPO} \iff f(s_i,c_i,\epsilon_i) = 1$. We are ultimately interested in strategies that can give rise to PNE ex post, which are rigorously defined below:

\begin{definition}[Ex post SPO stable]
    Suppose that $f$ is a partial ex ante strategy for a Bayesian proper delegation game $\cG(\rho,\tau, \cX,n)$. We say that $f$ is ex post SPO stable for the draw $(\vect{s},\vect{c},\vect{\epsilon}) \sim \cX^n$ if there exists a joint strategy profile $\vect{p} \in \cA_f(\vect{s},\vect{c},\vect{\epsilon})$ which is a PNE. 
\end{definition}

The main result of this section provides useful sufficient conditions for a partial ex ante strategy, $f$, to be ex post SPO stable for a given draw of player types. Before delving into the main theorem though, we define some relevant quantities. 

\begin{definition}[Total Ex Post Stable Delegation]
    Suppose that $f$ is a partial ex ante strategy for a Bayesian proper delegation game $\cG(\rho,\tau, \cX,n)$ with player types given by $(\vect{s},\vect{c},\vect{\epsilon}) \sim \cX^n$. Assuming that $s^* = \max \{ i \in [n] \mid f(s_i,c_i,\epsilon_i) = 0 \text{ and }\alpha(s_i,c_{min}) \geq \epsilon_i/s_i\}$ and $r = \alpha(s^*,c_{min})$,\footnote{If $\{ i \in [n] \mid \alpha(s_i,c_{min}) \geq \epsilon_i/s_i\} = \emptyset$, we let $r = 0$.}  we denote the total ex post stable delegation by $Del(f)$ and define it by:
    $$
    Del(f) = \sum_{i=1}^n s_i (1 - f(s_i,c_i,\epsilon)) \mathbb{I}(rs_i \geq \epsilon_i) 
    $$
    where $\mathbb{I}(\cdot)$ is an indicator function. 
\end{definition}

\begin{definition}[Total Ex Post Pool Deficit/Capacity]
    Suppose that $f$ is a partial ex ante strategy for a Bayesian proper delegation game $\cG(\rho,\tau, \cX,n)$ with player types given by $(\vect{s},\vect{c},\vect{\epsilon}) \sim \cX^n$. Assuming that $s^* = \max \{ i \in [n] \mid f(s_i,c_i,\epsilon_i) = 0 \text{ and }\alpha(s_i,c_{min}) \geq \epsilon_i/s_i\}$ and $r = \alpha(s^*,c_{min})$, we denote the total ex post pool deficit/capacity by $Def(f)$ and $Cap(f)$ respectively, and define them by:
    $$
    Def(f) = \sum_{i=1}^n\beta^-_i(s_i,c_i,\epsilon_i,r) f(s_i,c_i,\epsilon_i)
    $$
    $$
    Cap(f) = \sum_{i=1}^n\beta^-_i(s_i,c_i,\epsilon_i,r) f(s_i,c_i,\epsilon_i)
    $$
\end{definition}
With the notation above in hand, we can finally prove the main result of this section: 

\begin{theorem}
\label{thm:ex-post-SPO-stable-sufficient}
    Suppose that $f$ is a partial ex ante strategy for a Bayesian proper delegation game $\cG(\rho,\tau, \cX,n)$ with player types given by $(\vect{s},\vect{c},\vect{\epsilon}) \sim \cX^n$. The following is a sufficient condition for $f$ to be ex post SPO stable: 
    $$
    0 < Def(f) \leq Del(f) \leq Cap(f)
    $$
\end{theorem}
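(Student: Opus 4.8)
The plan is to exhibit an explicit joint strategy profile $\vect{p} \in \cA_f(\vect{s},\vect{c},\vect{\epsilon})$ and verify that it satisfies all four sufficient conditions of Theorem~\ref{thm:PNE-sufficient}. The profile is the natural one associated with $f$: every player $i$ with $f(s_i,c_i,\epsilon_i) = 1$ is an SPO; every player with $f(s_i,c_i,\epsilon_i) = 0$ and $rs_i \geq \epsilon_i$ delegates (to feasible pools, according to an allocation to be specified); and every remaining player (those with $f = 0$ and $rs_i < \epsilon_i$) is idle. Here $r = \alpha(s^*,c_{min})$ with $s^*$ as in the definition of $Del(f)$; note that with this choice of delegating set, the pivotal delegate stake of $\vect{p}$ is exactly $s^*$, so the per-unit delegation reward induced by $\vect{p}$ is indeed $r$ (using that $\rho$ is proper, hence $\alpha$ is monotone increasing in pledge, so $r_M(\vect{p},\vect{s}) = \alpha(s^*,c_{min})$). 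This self-consistency — that the $r$ used to define $Del(f), Def(f), Cap(f)$ is the same $r$ the profile actually produces — is the first thing I would nail down.

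Next I would check the three ``easy'' bullets of Theorem~\ref{thm:PNE-sufficient}. Feasibility of all active pools and the fact that delegators only delegate to feasible pools will follow from the Observation at the end of the proof of Lemma~\ref{lemma:deficit-capacity-stability-pool}, once we know each SPO's external delegation lies in its $[\beta^-_i,\beta^+_i]$ window. For idle players, $rs_i < \epsilon_i$ by construction, so their delegation utility $rs_i$ is at most $\epsilon_i$ — the third bullet. For non-idle players earning at least $\epsilon_i$: delegators earn $rs_i \geq \epsilon_i$ by construction; SPOs earn at least $\epsilon_i$ precisely because the gap term $G(\lambda_i,c_i,\epsilon_i,r)$ includes the summand $\epsilon_i + c_i - a(\lambda_i)$, and Lemma~\ref{lemma:SPO-content} combined with $\beta_i \geq \beta^-_i$ gives $b(\lambda_i)\beta'_i - r\beta_i \geq G \geq \epsilon_i + c_i - a(\lambda_i)$, i.e. $u^P \geq \epsilon_i$.

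The crux is the fourth bullet: we must be able to choose a delegation allocation that (a) routes exactly $Del(f)$ units of stake from the delegating players, and (b) gives each active pool $i$ an external-delegation amount $\beta_i$ with $\beta^-_i \leq \beta_i \leq \beta^+_i$. This is where the chain $0 < Def(f) \leq Del(f) \leq Cap(f)$ does its work: $Def(f) = \sum_i \beta^-_i f_i$ is the minimum total external delegation needed to keep all SPOs content, $Cap(f)$ (which the statement writes with $\beta^-_i$ but is evidently meant to be $\sum_i \beta^+_i f_i$ — I would flag this apparent typo) is the maximum total they can absorb, and $Del(f)$ is the supply of stake willing to delegate. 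Since $Del(f)$ lies between these two totals and each feasibility window $[\beta^-_i,\beta^+_i]$ is a nonempty interval (finiteness of $\beta^-_i$ and $\beta^+_i$ is implied by $Def(f), Cap(f)$ being finite real numbers with $Def(f) > 0$), a straightforward ``water-filling'' argument produces nonnegative $\beta_i \in [\beta^-_i,\beta^+_i]$ summing to $Del(f)$; then any fractional assignment of the $Del(f)$ delegating stake to the pools realizing these $\beta_i$ values (splitting individual delegators' stake across pools as needed, which $\cD_i$ permits) gives a valid profile. The one subtlety I expect to spend real care on is handling the degenerate cases — when the delegating set is empty ($r = 0$ and $Del(f) = 0$, contradicting $Def(f) > 0$, so vacuously there is nothing to check unless there are also no SPOs) and when $s^*$ is not well-defined — and making sure the indices and the $f = 0$ condition in $s^*$'s definition interact correctly with who actually delegates versus who is idle. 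Once the allocation exists, Theorem~\ref{thm:PNE-sufficient} immediately yields that $\vect{p}$ is a PNE, hence $f$ is ex post SPO stable.
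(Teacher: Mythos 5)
Your proposal follows essentially the same route as the paper's proof: the same explicit profile (SPOs per $f$, delegators exactly those with $f=0$ and $rs_i \geq \epsilon_i$, the rest idle), the same verification of the bullets of Theorem~\ref{thm:PNE-sufficient}, and the same use of the chain $Def(f)\leq Del(f)\leq Cap(f)$ to route delegation within each pool's $[\beta^-_i,\beta^+_i]$ window; you also correctly spot the $\beta^-_i$ versus $\beta^+_i$ typo in the definition of $Cap(f)$. The one step you state too quickly is the idle players' incentive: an idle player with $s_i > s^*$ who deviates to delegation becomes the pivotal delegate and raises the per-unit reward to $\alpha(s_i,c_{min}) \geq r$, so $rs_i < \epsilon_i$ alone does not suffice --- you need, as the paper argues, that such a player is excluded from the set defining $s^*$ and hence satisfies $\alpha(s_i,c_{min}) < \epsilon_i/s_i$. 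This is exactly the ``interaction of indices'' you flag as needing care, so it is a matter of completion rather than a wrong turn.
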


\begin{proof}
    Suppose that $f$ satisfies the desired inequalities for a given draw of player types $(\vect{s},\vect{c},\vect{\epsilon}) \sim \cX^n$. We begin with corner cases, the first being when $f(s_i,c_i,\epsilon_i) = 0$ for all players. In this case $Def(f) = Del(f) = Cap(f) = 0$, which satisfies the inequalities of the theorem statement. In addition, such a scenario implies that there are no active pools, hence any form of delegation forcibly earns no utility. This means that the only joint strategy profile $\vect{p} \in \cA_f(\vect{s},\vect{c},\vect{\epsilon})$ which is a PNE is that where all players are idle, hence $f$ is still ex post SPO stable for the draw of player types and the statement holds. Going forward, we assume that there is at least one player with $f(s_i,c_i) = 1$.

    The second corner case occurs when for every player such that $f(s_i,c_i) = 0$ we have $\alpha(s_i,c_{min}) < \epsilon_i/s_i$. Consider any joint strategy profile $\vect{p} \in \cA_f(\vect{s},\vect{c},\vect{\epsilon})$ where the set of delegating agents is non-empty. In this case, there is a pivotal delegate $s^*$ who necessarily earns $\alpha(s^*,c_{min}) s^*$, which from assumption must be less than $\epsilon^*$, their idle utility. It follows that $\vect{p}$ cannot be an ex post PNE. As a consequence, any joint strategy profile $\vect{p} \in \cA_f(\vect{s},\vect{c},\vect{\epsilon})$ which is a PNE must have no delegators, which means that $Del(f) = 0$ and if the $i$-th player is an SPO, it must be the case that $\beta_i = 0$. From Lemma \ref{lemma:deficit-capacity-stability-pool} we know that if the $i$-th agent is an SPO, then their deficit is given by $\beta^-_i > 0$, which cannot be satisfied by $\beta_i = 0$, as a consequence the $i$-th player prefers to deviate from being an SPO and hence $\vect{p}$ is not a PNE. This shows that there can be no $\vect{p} \in \cA_f(\vect{s},\vect{c},\vect{\epsilon})$ which is a PNE for this corner case, and this is consistent with the theorem statement as $Del(f) = 0$ yet $Def(f) > 0$.  

    With the second corner case taken care of, we can make the further assumption that there exists some player such that $f(s_i,c_i,\epsilon_i) = 0$ and $\alpha(s_i,c_{min}) \geq \epsilon_i/s_i$. 
    %
    Before continuing, let $s^* = \max \{ i \in [n] \mid f(s_i,c_i,\epsilon_i) = 0 \text{ and }\alpha(s_i,c_{min}) \geq \epsilon_i/s_i\}$ and $r = \alpha(s^*,c_{min})$. Moreover, let $A = \{i \in [n] \mid f(s_i,c_i,\epsilon_i) = 0 \text{ and } r s_i < \epsilon_i\}$ and $B = \{i \in [n] \mid f(s_i,c_i,\epsilon_i) = 0 \} \setminus A$. We will show that there exists a PNE, $\vect{p} \in \cA_f(\vect{s},\vect{c},\vect{\epsilon})$, such that if $i \in A$, the $i$-th agent is idle ($p_i = a_{I}$) and if $i \in B$, the $i$-th agent is a delegator ($p_i \in \cD_i$). In such a strategy profile, it must be the case that $s^*$ is the pivotal stake and $r$ is the per-unit delegation rewards to feasible pools.
    
    For now let us assume that all delegation is given to feasible pools (we will show this is possible shortly). If the $i$-th player is a delegator, then $i \in B$, in which case the agent earns $r s_i \geq \epsilon_i$, hence they weakly prefer being a delegator to being idle. 
    
    If the $i$-th player is idle, we distinguish two potential cases. The first case is when $s_i < s^*$, in which case if they agent deviates to becoming a delegator, they stand to earn $r s_i$. However, the fact that the agent is idle implies that $i \in B$, in which case $r s_i < \epsilon_i$. The second case is when $s_i > s^*$, in which case the construction of $s^*$ implies that $\alpha(s_i,c_{min}) < \epsilon_i/s_i$. If such a player deviates to becoming a delegator, doing so changes per-unit delegation rewards to $\alpha(s_i,c_{min})$ in which case they earn $\alpha(s_i,c_{min}) s_i < \epsilon_i$ utility for doing so, which is less than what they obtain from being idle. 

    To finalize the proof, we notice that if $\vect{p} \in \cA_f(\vect{s},\vect{c},\vect{\epsilon})$ is such that for $i \in A$, the $i$-th agent is idle ($p_i = a_{I}$) and for $i \in B$, the $i$-th agent is a delegator ($p_i \in \cD_i$), it must be the case that the total stake to be delegated is precisely $Del(f)$. In addition, $Def(f)$ and $Cap(f)$ also represent the sum of all pool deficits and capacities, respectively, hence the fact that $Def(f) \leq Del(f) \leq Cap(f)$ implies that there exists a way to delegate to pools that respects individual pool deficits and capacities. The resulting $\vect{p} \in \cA_f(\vect{s},\vect{c},\vect{\epsilon})$ from delegating this way is in turn a PNE from Theorem \ref{thm:PNE-sufficient} as desired.

\end{proof}

If $f$ is ex post SPO stable for the draw $(\vect{s},\vect{c},\vect{\epsilon}) \sim \cX^n$ there are generally multiple joint strategy profiles $\vect{p} \in \cA_f(\vect{s},\vect{c},\vect{\epsilon})$ which give rise to PNE. In the following section we provide a means of distinguishing the performance different PNE which arise. We quantify performance of a given joint strategy profile $\vect{p}$ via 3 key metrics: Decentralization, Participation and System Expenditure.

\section{ Decentralisation, Participation and Expenditure Objectives}
\label{sec:objectives}
\subsection{Decentralization Objective}

 Recall that a specific strategy profile, $\vect{p} \in \cA$, consists of relevant information regarding which agents have activated pools, which agents have delegated to said active pools, and which agents forego participating in the pool creation/delegation scheme. From the strategy profile, we can extrapolate the {\em public pool profile}, which consists of the information available to a third-party observer of the system (who may not know which agent specifically owns stake used to pledge or delegate). We encode the public profile with two vectors,  $(\vect{\lambda},\vect{\beta})$, of variable dimension $1 \leq k \leq  n$ which in turn represents the number of pools that are active in a public profile. For a given pool $j \in [k]$, the terms $\lambda_j$ and $\beta_j$ represent how much was pledged to open the pool and how much external stake is delegated to the pool respectively.  In addition, $\sigma_j = \lambda_j + \beta_j$ is the size of the $j$-th pool, so that $\vect{\sigma} = \vect{\lambda} + \vect{\beta}$ is a vector containing the sizes of all pools created in a strategy profile. With this notation on hand we can define the following objectives that measure the relative performance of different joinst strategy profiles in a proper delegation game:

 \subsection{Participation Objective}
 In order to evaluate the participation of a system we compute the sum of  the \textbf{absolute stake} that is either delegated or pledged (a quantity which we call the ``active stake''). A system designer seeks to maximize participation.
 \begin{definition}[Participation Objective]
 \label{def:participation-objective}
 Let $\vect{p} \in \cA$ be a joint strategy profile in the proper delegation game, $\cG(\rho,\tau,(\vect{s},\vect{c},\vect{\epsilon}))$, that gives rise to the public pool profile $(\vect{\lambda},\vect{\beta})$ with $k$ pools of sizes given by $\vect{\sigma} = \vect{\lambda} + \vect{\beta}$.
 We define the participation objective $O^P$ as follows:
 $$
 O^P(\vect{p}) = \sum_{j=1}^k (\lambda_j+\beta_j)= \sum_{j=1}^k \sigma_j
 $$
\end{definition}
 
\subsection{Expenditure Objective}
We evaluate the cost that is incurred by the system in paying all agents for their participation in the system as design objective. Unlike participation, a system designer ideally seeks to minimize expenditure. 
\begin{definition}[Expenditure Objective]
\label{def:expenditure-objective}
Suppose that $\vect{p} \in \cA$ is a joint strategy profile for the proper delegation game, $\cG(\rho,\tau,(\vect{s},\vect{c},\vect{\epsilon}))$. We define the expenditure objective, $O^E$ as follows:
$$
O^E(\vect{p}) = \sum_{i=1}^n R_i(\vect{p})
$$
\end{definition}

\subsection{Decentralization Objective}
Finally we define a family of decentralization objectives $O^D_\ell$, with relevant parameter $\ell \geq 0$. For a fixed parameter, $\ell$, $O^D_\ell$ takes as input a joint strategy profile $\vect{p} \in \cA$ in the proper delegation game, $\cG(\rho,\tau,(\vect{s},\vect{c},\vect{\epsilon}))$ and outputs the smallest collective pledge amongst coalitions of pools of aggregate size exceeding an $\ell \cdot O^P(\vect{p})$. The value of $\ell$ will typically take values of relevance to resilience guarantees in Byzantine consensus protocols (i.e. $1/3,1/2,2/3$). The following is a more precise definition.

\begin{definition}[Decentralization Objective]
\label{def:decentralization-objective}
Suppose that $\vect{p} \in \cA$ is a joint strategy profile in the proper delegation game, $\cG(\rho,\tau,(\vect{s},\vect{c},\vect{\epsilon}))$, with a public pool profile given by $(\vect{\lambda},\vect{\beta})$ over $k$ pools. For a given $\ell \geq 0$, we let $P_\ell(\vect{p})$ denote the set of pool coalitions with aggregate stake exceeding $\ell \cdot O^P(\vect{p})$:

$$
P_\ell(\vect{p})=\{S \subseteq [k]: \sum_{i\in S} \sigma_i \geq \ell \cdot  O^P(\vect{p})\}.
$$
  With this in hand, we define the decentralization objective $O^D_\ell(\vect{p})$ as follows:
$$
O^D_{\ell}(\vect{p}) = \min_{S\in P_\ell(\vect{\lambda},\vect{\beta})} \sum_{i\in S} \lambda_i.
$$
\end{definition}
Notice that most of our definitions do not preclude us from considering a scenario in which all agents forego participating in the protocol. In this case, $k = 0$, and $\vect{\lambda},\vect{\beta} = \{0\}$, the unique zero-dimensional vector. Furthermore $P_\ell(\vect{\lambda},\vect{\beta}) = \emptyset$ as $[0] = \emptyset$, and the decentralization objective of this strategy profile is 0.

\subsection{Multi-objective Optimization}
In all that follows of this paper, we will be interested in measuring the performance of payment schemes for delegation games over the the three objectives mentioned above. As mentioned previously, a system designer will seek to maximize participation, minimize expenditure and maximize decentralization. Simultaneously optimizing for each of these objectives is generally not possible, and hence we use a framework inspired by multi-objective optimization to understand tradeoffs between all three.

\section{Computational Methods and Results}
\label{sec:computational-methods-results}

Our main computational approach focuses on conceptualizing the performance of a partial ex ante strategy, $f$, for a given Bayesian proper delegation game $\cG(\rho,\tau,\cX,n)$. To do so, we measure the performance of $f$ for a given draw of player types, $(\vect{s},\vect{c},\vect{\epsilon}) \sim \cX^n$, in terms of the three objectives from Section \ref{sec:objectives}. At a high level, our approach proceeds in two stages: 
\begin{enumerate}
    \item First we establish whether $f$ satisfies the sufficient conditions set forth in Theorem \ref{thm:ex-post-SPO-stable-sufficient} for being ex post SPO stable.
    \item If $f$ is ex post SPO stable, then all $\vect{p} \in \cA_f(\vect{s},\vect{c},\vect{\epsilon})$ which are PNE exhibit the same participation breakdown (the amount of stake which is dedicated to being idle, delegating or pledging as an SPO respectively), and hence have equal values for $O^P$. This is not the case for $O^E$ and $O^D_\ell$, hence to study decentralization and expenditure, we construct a comprehensive set of ex post PNE, $\vect{p}^1,\dots,\vect{p}^m \in \vect{P} \in \cA_f(\vect{s},\vect{c},\vect{\epsilon})$ with different decentralization and expenditure performance to represent the potential spread of performance that can be achieved ex post for $f$. 
\end{enumerate}

\subsection{Representative Ex Post PNE}

In what follows we outline our methodology for constructing a representative set of PNE from $\cA(\vect{s},\vect{c},\vect{\epsilon})$ for understanding the potential decentralization and expenditure achieved by a given partial ex ante strategy, $f$, which is ex post SPO stable for a given draw of agent types. 

We consider a Bayesian proper delegation game, $\cG(\rho,\tau,\cX,n)$ and a partial ex ante strategy, $f$. Suppose that $f$ is ex post SPO stable for a given draw of player types, $(\vect{s},\vect{c},\vect{\epsilon})$, where at least one agent is an SPO. In what follows we outline our methodology for constructing a representative set of PNE from $\cA(\vect{s},\vect{c},\vect{\epsilon})$ for understanding the potential decentralization and expenditure achieved under $f$ ex post.

We let $\lambda_{min} \leq \lambda_{max}$ represent the smallest and largest pledges made by SPOs under $f$. More specifically, 
$$
\lambda_{min} = \min_{i: f(s_i,c_i,\epsilon_i) = 1} s_i \leq \max_{i: f(s_i,c_i,\epsilon_i) = 1} s_i = \lambda_{max}.
$$
We also let $m \in \mathbb{N}$ be a resolution parameter that dictates the number of representative PNE from $\cA_f(\vect{s},\vect{c},\vect{\epsilon})$ constructed. From these quantities, we construct an $m$-dimensional vector of {\it reference pledges}, $\bar{\vect{\lambda}} = (\bar{\lambda}_j)_{j=1}^m$, where the $j$-th reference pledge is defined as follows:
$$
\bar{\lambda}_j = \lambda_{min} + (j-1)\frac{(\lambda_{max} - \lambda_{min})}{m-1}
$$  

With $\bar{\lambda}_j$ in hand, we can construct the $j$-th representative PNE from $\cA_f(\vect{s},\vect{c},\vect{\epsilon})$ which we denote by $\vect{p}^j$. As in Theorem \ref{thm:ex-post-SPO-stable-sufficient}, we can fix the high level actions of agents between remaining idle to ensure ex post SPO stability. To do so, we once more let $s^* = \max \{i \in [n] \mid f(s_i,c_i,\epsilon_i) = 0 \text{ and } \alpha(s_i,c_i) \geq \epsilon_i/s_i\}$ and we let $r = \alpha(s^*,c_{min})$. We now consider an arbitrary $i$-th player in $\cG(\rho,\tau,(\vect{s},\vect{c},\vect{\epsilon}))$:
\begin{itemize}
\item If $f(s_i,c_i,\epsilon_i) = 0$ and $r s_i < \epsilon_i$, then $\vect{p}^j_i = a_{I}$
\item If $f(s_i,c_i,\epsilon_i) = 0$ and $r s_i \geq \epsilon_i$, then $\vect{p}^j_i \in \cD_i$
\item If $f(s_i,c_i,\epsilon_i) = 1$, then $\vect{p}^j_i = a_{SPO}$
\end{itemize}

All that remains to specify $\vect{p}^j$ is deciding where delegation goes to, for which we make use of the reference pledge, $\bar{\lambda}_j$. We do so by computing a delegation vector $\vect{\beta} = (\beta_i)_{i=1}^n$ first satisfying the deficit of all pools (using $Def(f) \leq Del(f)$ of the available delegation). Afterwards, we greedily fill pools with pledge closest to $\bar{\lambda}_j$ up to capacity using the remaining $Del(f) - Def(f)$ delegation at our disposal. The details of the greedy delegation allocation are provided in Algorithm \ref{alg:greedy-delegation}. Given the target greedy delegation allocation, $\vect{\beta}$, we simply let $\vect{p}^j$ be any PNE which is consistent with the target delegation (since they all achieve the same expenditure and decentralization objectives). 

\begin{algorithm}
    \caption{Greedy Delegation Allocation}
    \label{alg:greedy-delegation}
    \begin{algorithmic}[1] 
        \Procedure{GreedyDelegation}{$\bar{\lambda}_j,\vect{\beta^-},\vect{\beta^+},Del(f)$} 
            \State $\vect{\beta}\gets \vect{\beta^-}$
            \Comment{Satisfying pool deficit}
            \State $X \gets Del(f) - \sum_{i=1}^n \beta_i$
            \Comment{Remaining delegation}
            \State $A \gets \{i \in [n] \mid \beta_i < \beta^+_i\}$
            \State $j^* \gets \text{argmin}_{i \in A} |\lambda_i - \bar{\lambda}_j|$
            \Comment{Ties broken lexicographically in argmin}
            \While{$X\not=0$} 
                \State $ \beta_{j^*}\gets \beta_{j^*} + \min\{X,(\beta^+_{j^*} - \beta_{j^*}) \}$
                \State $X \gets Del(f) - \sum_{i=1}^n \beta_i$
                \State $A \gets \{i \in [n] \mid \beta_i < \beta^+_i\}$
                \State $j^* \gets \text{argmin}_{i \in A} |\lambda_i - \bar{\lambda}_j|$
            \EndWhile\label{euclidendwhile}
            \State \textbf{return} $\vect{\beta}$
        \EndProcedure
    \end{algorithmic}
\end{algorithm}

\paragraph*{Computing Participation and Expenditure Objectives}

Computing $O^P$ and $O^E$ for a given $\vect{p} \in \cA$ in a proper delegation game, $\cG(\rho,\tau,(\vect{s},\vect{c},\vect{\epsilon}))$, is straightforward. In order to do so, we extrapolate the relevant public pool profile, $(\vect{\lambda},\vect{\beta})$ for $\vect{p}$, where $\vect{\lambda} = (\lambda_j)_{j=1}^k$ and $\vect{\beta} = (\beta_j)_{j=1}^k$ represent the pledge and external delegation that arise for the $k \geq 0$ active pools. As per Definitions \ref{def:participation-objective} and \ref{def:expenditure-objective}, the participation and expenditure objectives are given by:
$$
O^P(\vect{p}) = \sum_{j=1}^k (\lambda_j+\beta_j)
$$
$$
O^E(\vect{p}) = \sum_{i=1}^n R_i(\vect{p})
$$
In the scenario where all pools from $\vect{p}$ are feasible, it is the case that the utility an SPO earns is given by $\rho(\lambda_j,\beta_j) - r \beta_j - c > 0$. Moreover, the total rewards given to delgators to the pool is $r \beta_j$, hence when summing rewards given to all agents in the system, it suffices to compute the sum of rewards over pools, hence we get
$$
O^E(\vect{p}) = \sum_{j=1}^k \rho(\lambda_j,\beta_j)
$$ 

\paragraph*{Approximating the Decentralization Objective}

To wrap up our computational methods, we focus on the problem of computing the decentralization objective, $O^D_\ell$, for a given joint strategy $\vect{p} \in \cA$ in a given proper delegation game, $\cG(\rho,\tau,(\vect{s},\vect{c},\vect{\epsilon}))$. As per Definition \ref{def:decentralization-objective}, the value of $O^D_\ell(\vect{p})$ is the smallest cumulative stake of any coalition of pools with size that exceeds $\ell T$. We can express this computational problem in terms of the public pool profile $(\vect{\lambda},\vect{\beta})$ which arises from $\vect{p}$. To do so, we let $\vect{\lambda} = (\lambda_j)_{j=1}^k$, $\vect{\beta} = (\beta_j)_{j=1}^k$ and $\vect{\sigma} = \vect{\lambda} + \vect{\beta}$ represent the pledge, external delegation and total size of each of the $k \geq 0$ active pools that arise from $\vect{p}$. With this in hand, the value of $O^D_{\ell}(\vect{p})$ is given by the optimization problem in Equation \ref{eq:decentralization-optimization}. 

\begin{equation}
\label{eq:decentralization-optimization}
\begin{aligned}
\min_{x_1,\dots x_k} \quad & \sum_{j=1}^{k}{\lambda_j x_j}\\
\textrm{s.t.} \quad & \sum_{j=1}^k \sigma_j x_j \geq \ell T\\
  &x_j \in \{0,1\}    \\
\end{aligned}
\end{equation}

This optimization problem is NP-hard as it is precisely an instance of the $\{0,1\}$-min knapsack problem, \cite{csirik1991heuristics}. In order to approximate $O^D_\ell$, we use the typical dynamic programming FPTAS as per \cite{tauhidul2009approximation}. 

\subsection{Relevant Modeling Choices and Parameters}

In this section we provide details regarding further modeling choices and parameter settings we make before delving into experimental results. 

\paragraph*{Threshold Partial Ex Ante Strategies}

Our framework for partial ex ante strategies is very general. For a given Bayesian proper delegation game, $\cG(\rho,\tau,\cX,n)$, a partial ex ante strategy can be an arbitrary function from player types to whether they act as an SPO or not. In practice we expect larger players (with more stake) to be SPOs for multiple reasons (increased interest in the proper functioning of the underlying blockchain, potentially less frictions to operate as SPO, etc.). For this reason, we consider a simple class of partial ex ante strategies with agents operating as SPOs only if they exceed a stake threshold. 
\begin{definition}[Threshold Partial Ex Ante Strategy]
We let $f^t_\alpha: \mathbb{R}^2 \rightarrow \{0,1\}$ denote a threshold partial ex ante strategy with threshold $\theta \geq 0$. The strategy is specified by:
$$
f^t_\theta(s,c,\epsilon) = 1 \iff s \geq \theta
$$
\end{definition}

\paragraph*{Bounded Pareto Distribution for Stake}

As is common in economic literature, we can assume that stake distributions obey a power law \cite{pareto1964cours}. For this reason, we consider type distributions such that the marginal distribution of stake obeys a bounded Pareto distribution:

\begin{definition}[Truncated Pareto Distribution]\label{def:Pareto}
We say that $Z$ is a Pareto distribution with minimum value $L > 0$, maximum value $H > L$ and inequality parameter $\gamma$ if it has a pdf given by:
\begin{equation*}
\eta(x) = 
 \begin{cases} 
      \left(\frac{\gamma L^\gamma}{1 - (L/H)^\gamma} \right) x^{-\gamma-1} & x \in [L,H] \\
      0 & x \notin [L,H] 
   \end{cases}
\end{equation*}
We write $s \sim Pareto(L,H,\gamma)$ when an agent's stake is distributed according to a bounded Pareto distribution. 
\end{definition}

In order to acheive marginal Pareto distributions on player stake, we consider type distributions $\cX$ which result as product distributions over player stake, cost and idle utility respectively. Furthermore, without loss of generality, we normalize the value of stake with respect to the lower bound $L$, so we can let $L = 1$. In more detail, we consider type distributions parametrized by: 
\begin{itemize}
\item $H,\gamma$: the upper bound and exponent in Pareto PDF for stake distribution.
\item $c_{min},c_{max}$: the minimal and maximal values of pool operation cost.
\item $\epsilon_{min},\epsilon_{max}$: the minimal and maximal values of idle utility.
\end{itemize}

The type distribution with these parameters is denoted $\cX(H,\gamma,c_{min},c_{max},\epsilon_{min},\epsilon_{max})$, though when evident from context, we simply use $\cX$ as before. In order to sample from the distribution, $(s,c,\epsilon) \sim \cX(H,\gamma,c_{min},c_{max},\epsilon_{min},\epsilon_{max})$, we independently sample each component $s \sim Pareto(1,H,\gamma)$, $c \sim U[c_{min},c_{max}]$ and $\epsilon \sim U[\epsilon_{min},\epsilon_{max}]$. 

\subsection{Experimental Results} 

We provide some results for a proper Bayesian delegation game which demonstrate the flexibility of our approach in studying tradeoffs struck by payment schemes in proper delegation games. In what follows, we assume a baseline parameter setting upon which we modulate key parameters to show their impact on participation, decentralization, and expenditure objectives. 

\paragraph*{Baseline Parameter Settings}

We begin by providing details regarding the family of $\rho$ functions we explore in our experiments. Given we are modeling proper delegation games as per Definition \ref{def:proper-delegation-game}, we are considering separable pool reward functions such that $\rho(\lambda,\beta) = a(\lambda) + b(\lambda)\beta'$, where $\beta' = \min \{\tau - \lambda, \beta\}$ for the cap $\tau$, which we will specify shortly. In our experiments, we model $a(\lambda)$ and $b(\lambda)$ as polynomials of varying degree and positive coefficients (which is in fact similar to the formula for Cardano reward sharing schemes \cite{brunjes2020reward}). Our baseline formulas are given by $a(\lambda) = b(\lambda) = \lambda$. 

As an aside, we note that if $a(\lambda) = \sum_{i=1}^m z_i \lambda^i$, where $z_i > 0$ for all $i$, then it follows that $\alpha(\lambda,c) = \frac{a(\lambda) - c}{\lambda} = (\sum_{i=1}^m z_i \lambda^{i-1}) - \frac{c}{\lambda}$, which is in fact monotonically increasing in $\lambda$, as is required for a proper delegation game.  

For the marginal distribution of player stakes, we use a truncated Pareto distribution with lower bound $L = 1$, upper bound $H = 100$, and inequality parameter $\gamma = 1.5$. For SPO costs, we let lower and upper bounds for cost be $c_{min} = 0.4$ and $c_{max} = 0.6$ and for idle utilities, we simply assume that all players have the same $\epsilon = 0.01$. Finally, given the marginal stake distribution, we let $\tau = 200$ be the pool cap used for $\rho$. We begin by considering the threshold partial ex ante strategy $f^t_\theta$ with $\theta = 30$. Moreoever, we consider a Bayesian proper delegation game with $n = 1000$ agents drawn from the type distribution described above. In addition, we create $m = 100$ representative ex post PNE as per Algorithm \ref{alg:greedy-delegation} whenever $f^t_\theta$ is ex post SPO stable, and use $\ell = 0.5$ for the decentralization objective $O^d_\ell$. Finally, we repeat this process for $N = 500$ independent draws from $\cX^n$. 

Results from this parameteric setting are presented in Figures \ref{fig:baseline-participation} and \ref{fig:baseline-decentralization-expenditure}. With regards to participation, the empirical frequency of ex post stability for $f^t_\theta$ was 496 of the $N = 500$ draws of player types. In Figure \ref{fig:baseline-participation} we provide a breakdown of the participation achieved by $f^t_\theta$ for these draws, and we note that no players are idle in this setting. The proportional amount of stake used as SPO pledge and delegation respectively varies by about 0.15. With regards to expenditure and decentralization, we turn to Figure \ref{fig:baseline-decentralization-expenditure}, where we can see that in general as delegation is sent to pools with higher pledge, the system achieves better decentralization, albeit at a higher expenditure.

\begin{figure}
    \centering
    \includegraphics[width=0.45\textwidth]{./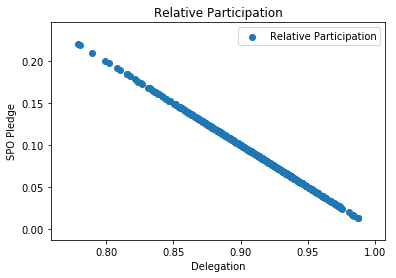}
    \includegraphics[width=0.48\textwidth]{./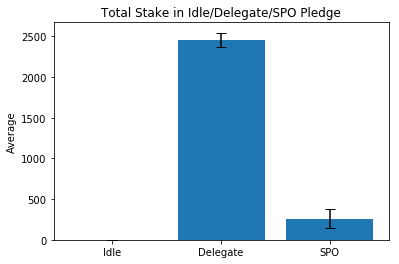}
    \caption{This Figure provides a breakdown of participation for the baseline parameter setting. Each point in the left plot is one of the 496 draws of types in the Bayesian PNE that gave rise to ex post SPO stability. The axes represent the relative proportion of stake that is used for delegation and SPO pledges. As we can see, all points lie on a line indicative of the fact that for no draw do we see idle agents. The right bar chart provides average values of absolute stake used by agents being idle, delegators or SPOs respectively. }
    \label{fig:baseline-participation}
\end{figure}

\begin{figure}
    \centering
    \includegraphics[width=0.45\textwidth]{./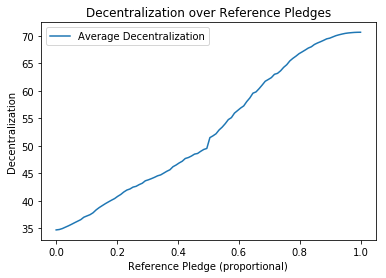}
    \includegraphics[width=0.475\textwidth]{./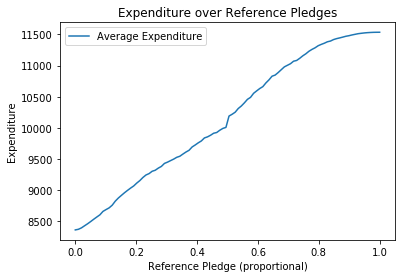}
    \includegraphics[width=0.475\textwidth]{./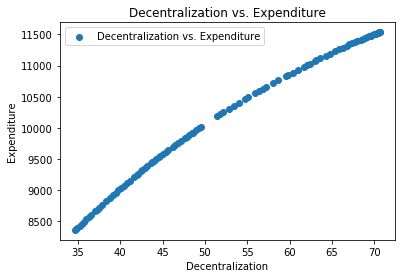}
    \caption{The top two plots provide insight regarding the spread of values the decentralization and expenditure objectives can take for ex post PNE in the baseline parameter setting. The $x$-axis for both of these plots corresponds to different representative PNE as per Algorithm \ref{alg:greedy-delegation}, in which the defining characteristic of a representative PNE is the reference pledge $\bar{\lambda}_j$, which is a proportional value relative to the spread of SPO pledges. The bottom graph simultaneously plots the performance of each representative ex post PNE in terms of decentralization and expenditure.}
    \label{fig:baseline-decentralization-expenditure}
\end{figure}

\paragraph*{Impact of Idle Utility}

In this section we modulate the idle utility: $\epsilon \in \{0.005,0.1,1.0,5.0,10.0\}$ of all players in the game. In Table \ref{table:modulate-eps} we see the empirical frequency of ex post stable PNE as we modulate $\epsilon$ values, and we see that there is no significant difference even as $\epsilon$ increases multiple orders of magnitude. We do however see significant differences in terms of the participation, decentralization and expenditure of ex post PNE as we change idle utilities. With regards to participation, Figure \ref{fig:vary-eps-participation} shows the changes in relative and absolute participation of agents as $\epsilon$ varies. As expected, with higher idle utilities, more agents prefer remaining idle over delegating. Moreover, this is in line with the fact that empirical frequencies for ex post stability do not change much, for if there is less delegation to go around, it can be easier to satisfy pool deficits and capacities. Of course, if too much delegation is idle, then there may not be enough delegation to satisfy pool deficits, and we may see a decrease in the empirical frequency of ex post SPO stability. Finally Figure \ref{fig:vary-eps-decentralization-expenditure} provides insight in terms of how decentralization and expenditure vary with $\epsilon$. As expected, large values of $\epsilon$ result in lower expenditure, as the system needs to pay out less delegators. On the other hand, we also see that larger baseline utilities can increase decentralization, which also makes sense from the decreased delegation that occurs, as any dominating coalition of pools will necessarily have more skin in the game as they may have less external delegation.

\begin{table}
\centering
\begin{tabular}{ |c|c|c|c|c|c| } 
 \hline
 $\epsilon$ & 0.005 & 0.1 & 1.0 & 5.0 & 10.0 \\ 
 \hline
 Ex post SPO stable draws & 498 & 497 & 499 & 495 & 499 \\ 
 \hline
\end{tabular}
\caption{The number of ex post SPO stable draws (out of 500) for different $\epsilon$ values.}
\label{table:modulate-eps}
\end{table}

\begin{figure} 
    \centering
    \includegraphics[width=0.45\textwidth]{./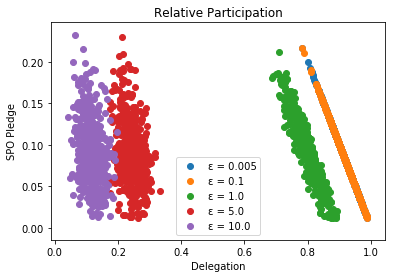}
    \includegraphics[width=0.48\textwidth]{./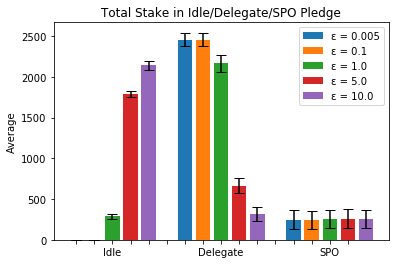}
    \caption{This Figure provides a breakdown of participation as $\epsilon$ varies in $\{0.005,0.01,0.02,0.05\}$. Different $\epsilon$ values to different colors and each point in the plot corresponds to draws of types that gave rise to ex post SPO stability. The axes represent the relative proportion of stake that is used for delegation and SPO pledges. The right bar chart provides average values of absolute stake used by agents being idle, delegators or SPOs respectively for different threshold values.}
    \label{fig:vary-eps-participation}
\end{figure}

\begin{figure}
    \centering
    \includegraphics[width=0.45\textwidth]{./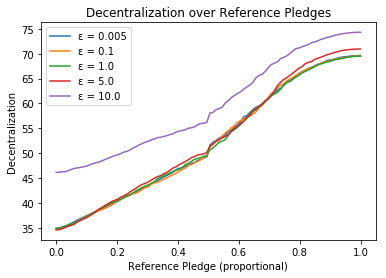}
    \includegraphics[width=0.475\textwidth]{./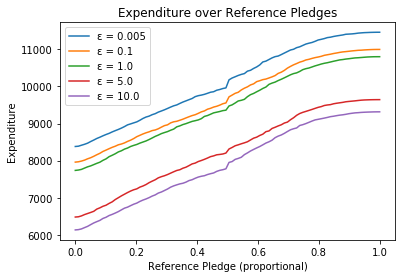}
    \includegraphics[width=0.475\textwidth]{./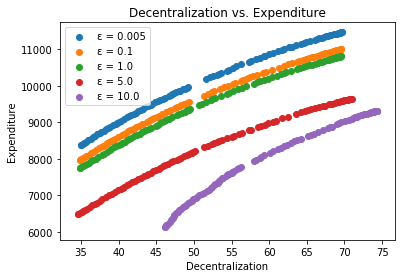}
    \caption{The top two plots provide insight regarding the spread of values the decentralization and expenditure objectives can take for ex post PNE $\epsilon$ values vary in $\{0.005,0.01,0.02,0.05\}$. The $x$ axis for both of these plots correspond to different representative PNE as per Algorithm \ref{alg:greedy-delegation}, in which the defining characteristic of a representative PNE is the reference pledge $\bar{\lambda}_j$, which is a proportional value relative to the spread of SPO pledges. The bottom graph simultaneously plots the performance of each representative ex post PNE in terms of decentralization and expenditure.}
    \label{fig:vary-eps-decentralization-expenditure}
\end{figure}

\paragraph*{Impact of Reward Function}

In this section we modulate the separable reward function we use in the proper delegation game, $\rho(\lambda,\beta) = a(\lambda) + b(\lambda)\beta'$. In addition we fix idle utilities to be larger than baseline at $\epsilon = 5$, where we've seen that agents can prefer to be idle over delegating. In this way we can glean insight regarding how different payment structures can foster participation. We modulate our payment scheme by varying, $a,b$ and $\tau$. Going forward we consider setting the constituent functions of $\rho$ with combinations of the following functions:
\begin{itemize}
    \item $g_1(\lambda) = 0.5\lambda$
    \item $g_2(\lambda) = \lambda$
    \item $g_3(\lambda) = 2\lambda$
    \item $g_4(\lambda) = \lambda + 0.005 \lambda^2$
    \item $g_5(\lambda) = \lambda + 0.01 \lambda^2$
    \item $g_6(\lambda) = \lambda + 0.05 \lambda^2$
\end{itemize}

We modulate $\rho$ in three different ways. First, we unilaterally modulate $a \in \{g_1,\dots,g_6\}$, then we unilaterally modulate $b \in \{g_1,\dots,g_6\}$, and finally we jointly modulate $(a,b) \in \{(g_1,g_1)\dots,(g_6,g_6)\}$. Empirical frequencies of ex post SPO stability are in Table \ref{table:modulate-a-b-ab}.

\begin{table}
\centering
\begin{tabular}{ |c|c|c|c|c|c|c| } 
 \hline
  & $g_1$ & $g_2$ & $g_3$ & $g_4$ & $g_5$ & $g_6$ \\ 
 \hline
 Modulate $a$ & 497 & 498 & 495 & 497 & 489 & 449 \\ 
 \hline
 Modulate $b$ & 497 & 498 & 496 & 495 & 496 & 499 \\ 
 \hline
 Modulate $(a,b)$ & 496 & 496 & 496 & 497 & 499 & 493 \\ 
 \hline
\end{tabular}
\caption{The number of ex post SPO stable draws (out of 500) for different settings of $\rho$.}
\label{table:modulate-a-b-ab}
\end{table}

In Figure \ref{fig:vary-rho-participation} we provide a detailed breakdown of how modulating $a$ and $b$ within $\rho$ can impact the participation reached by the system at ex post PNE. First of all we see that unilaterally modulating $a \in \{g_1,\dots,g_6\}$ (first row of Figure \ref{fig:vary-rho-participation}) accounts for much more change in participation over unilaterally modulating $b \in \{g_1,\dots,g_6\}$ (second row of Figure \ref{fig:vary-rho-participation}). Moreoever, when jointly modulating $(a,b) \in \{(g_1,g_1),\dots,(g_6,g_6)\}$ (third row of Figure \ref{fig:vary-rho-participation}), changes in participation closely resemble those made by individually modulating $a$, which suggest that for the functional values chosen, changes in $a$ account for the majority of differences in participation. This phenomenon largely results from the fact that the $a$ functions we explore with larger quadratic coefficients in $\lambda$ not only pay SPOs more, but they also increase values of $\alpha(s,c)$, which in turn increase delegation rewards. Increased delegation rewards in turn incentivize more players into being delegators over being idle. At the same time, this comes at an added expense, as can be seen in Figure \ref{fig:vary-rho-decentralization-expenditure} where higher degree expressions of $\lambda$ result in higher expenditure for the system. At the same time, these expensive ex post PNE also acheive large decentralization values, hence the system designer may find it beneficial to use such $\rho$ functions if prioritizing participation and decentralization is more important than minimizing expenditure.

\begin{figure} 
    \centering
    \includegraphics[width=0.45\textwidth]{./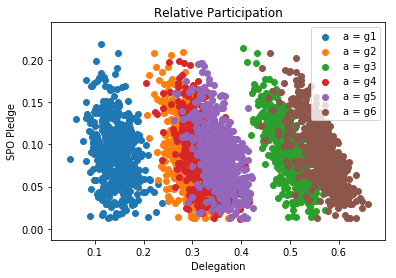}
    \includegraphics[width=0.48\textwidth]{./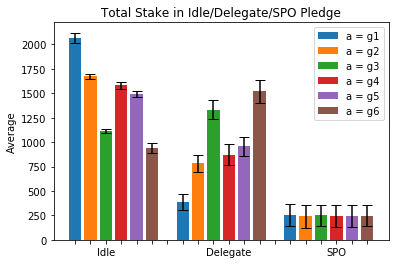}
    \includegraphics[width=0.45\textwidth]{./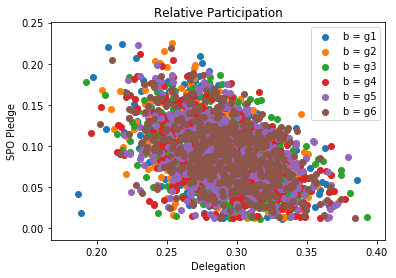}
    \includegraphics[width=0.45\textwidth]{./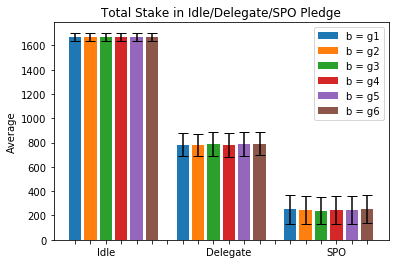}
    \includegraphics[width=0.45\textwidth]{./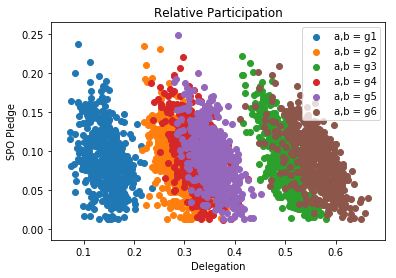}
    \includegraphics[width=0.45\textwidth]{./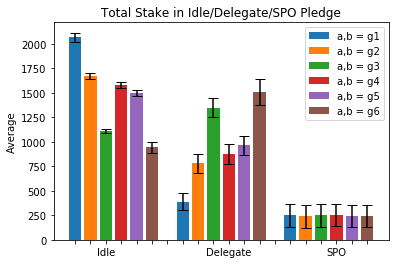}    
    \caption{This Figure provides a breakdown of participation as $a$ and $b$ vary in $\{g_1,\dots,g_6\}$. The first row corresponds to unilaterally modulating $a$, the second row corresponds to unilaterally modulating $b$, and the third row corresponds to modulating $(a,b) \in \{(g_1,g_1),\dots,(g_6,g_6)\}$. For each row, the left image is scatter plot where each point of a given color is an ex post PNE for a given $\rho$ function. For each row, the right image corresponds to the spread of absolute participation of each type (idle, delegation, SPO) for a given $\rho$ function.}
\label{fig:vary-rho-participation}
\end{figure}

\begin{figure}
    \centering
    \includegraphics[width=0.3\textwidth]{./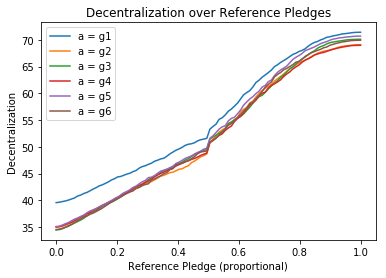}
    \includegraphics[width=0.32\textwidth]{./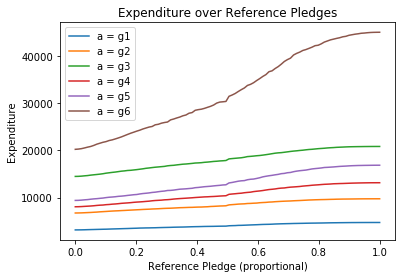}
    \includegraphics[width=0.32\textwidth]{./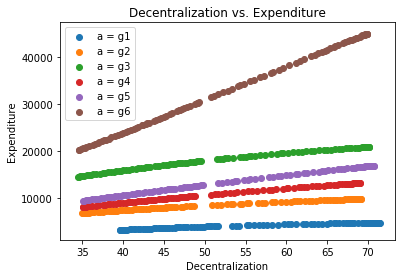}
    \includegraphics[width=0.3\textwidth]{./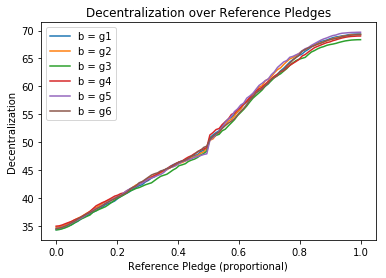}
    \includegraphics[width=0.32\textwidth]{./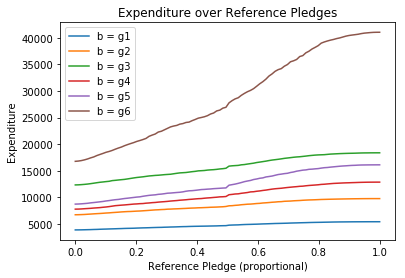}
    \includegraphics[width=0.32\textwidth]{./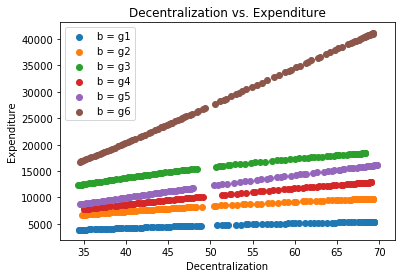}
    \includegraphics[width=0.3\textwidth]{./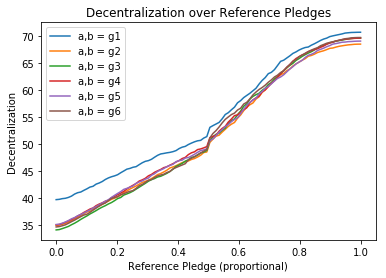}
    \includegraphics[width=0.32\textwidth]{./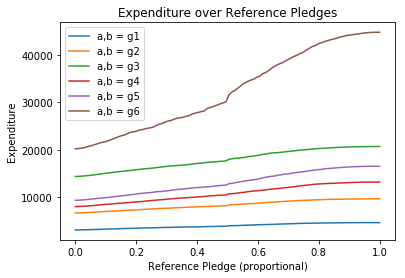}
    \includegraphics[width=0.32\textwidth]{./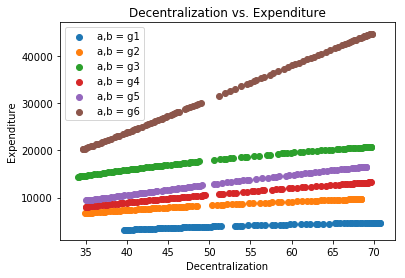}
    \caption{This Figure provides a breakdown of decentralization and expenditure for representative ex post PNE as $a$ and $b$ vary in $\{g_1,\dots,g_6\}$. The first row corresponds to unilaterally modulating $a$, the second row corresponds to unilaterally modulating $b$, and the third row corresponds to modulating $(a,b) \in \{(g_1,g_1),\dots,(g_6,g_6)\}$. For each row, the left image plots decentralization and the middle image expenditure for representative ex post PNE with increasing reference pledge values. For a given row, the right image simultaneously plots decentralization and expenditure for each representative ex post PNE. For each plot, different colors correspond to different $\rho$ functions generated by modulating $a$ and $b$.}
\label{fig:vary-rho-decentralization-expenditure}
\end{figure}

Finally,  we also modulate $\tau \in \{100,150,200,250\}$. Empirical frequencies of ex post SPO stability can be found in Table \ref{table:modulate-tau}. Once more we use $\epsilon = 5$ to glean information regarding participation tradeoffs for different $\tau$ values. In Figure \ref{fig:vary-tau-participation} we provide a detailed breakdown of how modulating $\tau$ values can impact the participation reached by the system at ex post PNE. The most salient observation from the plots is that for the given choices of $\tau$ there is not much change in participation. This is due to the fact that for $\tau = 200$ relatively few pools are saturated at representative ex post PNE, hence the relative changes in $\tau$ we explore do not largely change the representative ex post PNE (they still result in few pools being saturated). When delegation is closer to $Cap(f)$, we may see a stronger impact in modulating $\tau$, as larger values of $\tau$ necessarily increase the capacity of all pools, hence providing more leway to allocate delegation in ex post PNE. Figure \ref{fig:vary-tau-decentralization-expenditure} on the other hand shows that our modulations in $\tau$ do not have a large impact on pledge, but they do have a large impact on expenditure. This once again boils down to the number of saturated pools at representative ex post PNE. Though there isn't much of a relative difference in number of pools that are saturated (having a lower impact on decentralization), expenditure is more sensitive to number of pools saturated and hence we see a larger amount of pool rewards being given at representative ex post PNE.

\begin{table}
\centering
\begin{tabular}{ |c|c|c|c|c|c| } 
 \hline
 $\tau$ & 100 & 150 & 200 & 250  \\ 
 \hline
 Ex post SPO stable draws & 499 & 496 & 495 & 496 \\ 
 \hline
\end{tabular}
\caption{The number of ex post SPO stable draws (out of 500) for different $\tau$ values.}
\label{table:modulate-tau}
\end{table}

\begin{figure}
    \centering
    \includegraphics[width=0.45\textwidth]{./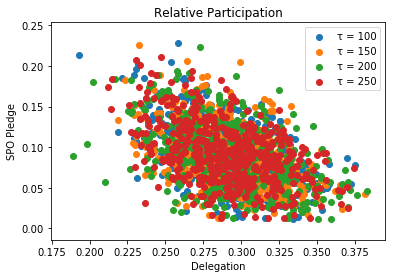}
    \includegraphics[width=0.48\textwidth]{./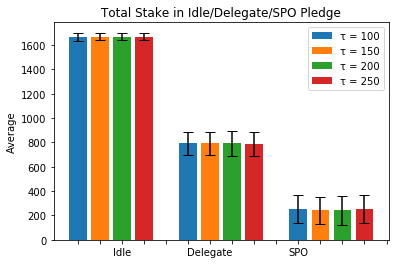}
    \caption{This Figure provides a breakdown of participation for $\tau \in \{100,150,200,250\}$. $\tau$ values correspond to different colors and each point in the plot corresponds to draws of types that gave rise to ex post SPO stability. The axes represent the relative proportion of stake that is used for delegation and SPO pledges. The right bar chart provides average values of absolute stake used by agents being idle, delegators or SPOs respectively for different threshold values.}
\label{fig:vary-tau-participation}
\end{figure}

\begin{figure}
    \centering
    \includegraphics[width=0.45\textwidth]{./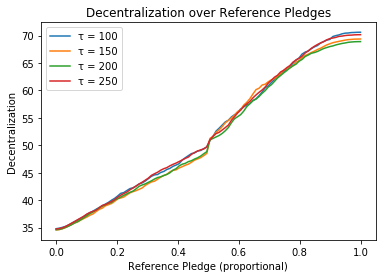}
    \includegraphics[width=0.475\textwidth]{./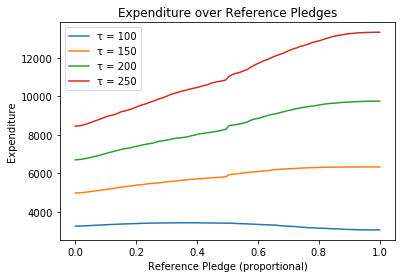}
    \includegraphics[width=0.475\textwidth]{./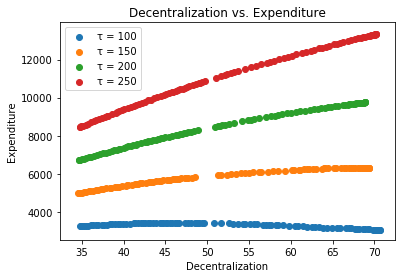}
    \caption{The top two plots provide insight regarding the spread of values the decentralization and expenditure objectives can take for ex post PNE when $\tau \in \{100,150,200,250\}$. The $x$-axis for both of these plots corresponds to different representative PNE as per Algorithm \ref{alg:greedy-delegation}, in which the defining characteristic of a representative PNE is the reference pledge $\bar{\lambda}_j$, which is a proportional value relative to the spread of SPO pledges. The bottom graph simultaneously plots the performance of each representative ex post PNE in terms of decentralization and expenditure.}
\label{fig:vary-tau-decentralization-expenditure}
\end{figure}

\paragraph*{Impact of SPO Threshold in $f^t_{\theta}$}

We modulate the threshold for SPO operation in the ex ante strategy $f^t_\theta$. We consider values $\theta \in \{10,20,30,40,50,60\}$ and Table \ref{table:modulate-threshold} shows the number of ex post SPO stable draws for each given threshold value. The first observation we can make is that the empirical probability that $f^t_\theta$ be ex post SPO stable is decreasing in $\theta$. This makes sense for two reasons; first of all, as $\theta$ increases, pivotal delegates become larger, which in turn increases $r$, the per-unit delegator rewards, thus leaving less rewards for SPOs, and hence decreasing their pool capacity. Second of all, an increased threshold also means that there is more delegation to go around, both from "large" delegates who lie just under the threshold, but also from agents who may have been idle, but with an increased $r$ decide to delegate. All these factors contribute to decreased empirical probability of being ex post SPO stable. Figure \ref{fig:vary-thresh-participation} also provides us a more fine-grained perspective on how participation (and hence $O^P$) changes as a function of $\theta$, where we see once more that increased thresholds decrease SPO operation and increase overall delegation.

\begin{table}
\centering
\begin{tabular}{ |c|c|c|c|c|c|c| } 
 \hline
 $\theta$ & 10 & 20 & 30 & 40 & 50 & 60 \\ 
 \hline
 Ex post SPO stable draws & 500 & 500 & 496 & 478 & 428 & 344 \\ 
 \hline
\end{tabular}
\caption{The number of ex post SPO stable draws (out of 500) for each threshold value of $\theta$.}
\label{table:modulate-threshold}
\end{table}

\begin{figure}
    \centering
    \includegraphics[width=0.45\textwidth]{./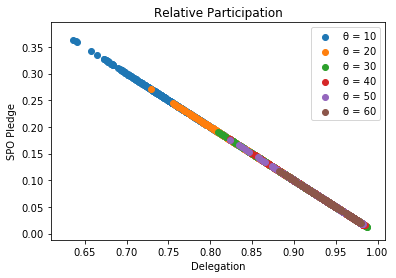}
    \includegraphics[width=0.48\textwidth]{./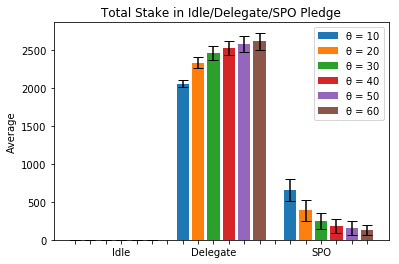}
    \caption{This Figure provides a breakdown of participation as thresholds vary from $\theta \in \{10,20,30,40,50,60\}$. $\theta$ values correspond to different colors and each point in the plot corresponds to draws of types that gave rise to ex post SPO stability. The axes represent the relative proportion of stake that is used for delegation and SPO pledges. As we can see, all points lie on a line indicative of the fact that for no draw do we see idle agents. The right bar chart provides average values of absolute stake used by agents being idle, delegators or SPOs respectively for different threshold values.}
    \label{fig:vary-thresh-participation}
\end{figure}

To gain insight with respect to how decentralization and expenditure are affected by $\theta$, we turn to Figure \ref{fig:vary-thresh-decentralization-expenditure}. The first two images in the figure plot the decentralization and expenditure objectives respectively, as we consider representative PNE of larger reference pledges. Interestingly, we see that as $\theta$ increases, decentralization and expenditure in general increase, and moreover they become more constant as a function of representative ex post PNE reference pledge.  Further observing the third image in the figure, we see that the performance of the $\theta = 10$ threshold is better than others, but we recall that all these points represent ex post PNE, hence depending on the threshold exhibited by players in an ex post PNE, the system can exhibit a multitude of decentralization and expenditure objective values (along all $\theta$
values).

\begin{figure}
    \centering
    \includegraphics[width=0.45\textwidth]{./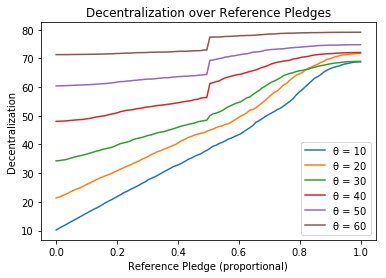}
    \includegraphics[width=0.475\textwidth]{./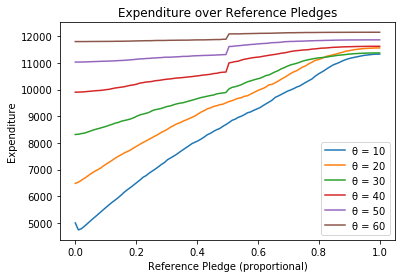}
    \includegraphics[width=0.475\textwidth]{./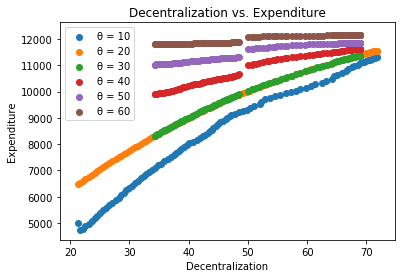}
    \caption{The top two plots provide insight regarding the spread of values the decentralization and expenditure objectives can take for ex post PNE as thresholds vary from $\theta \in \{10,20,30,40,50,60\}$. The $x$ axis for both of these plots correspond to different representative PNE as per Algorithm \ref{alg:greedy-delegation}, in which the defining characteristic of a representative PNE is the reference pledge $\bar{\lambda}_j$, which is a proportional value relative to the spread of SPO pledges. The bottom graph simultaneously plots the performance of each representative ex post PNE in terms of decentralization and expenditure.}
    \label{fig:vary-thresh-decentralization-expenditure}
\end{figure}

\paragraph*{Impact of Inequality of Pareto Distribution}

In this section we modulate $\gamma$ from the Pareto distribution: $\gamma \in \{1.4,1.45,1.5,1.55,1.6\}$. Table \ref{table:modulate-gamma} shows the number of ex post SPO stable draws for each given threshold value. Unlike when we modulate thresholds, we see that changes in $\gamma$ within the range we explored did not have a significant impact on the empirical probability of being ex post SPO stable. 

\begin{table}
\centering
\begin{tabular}{ |c|c|c|c|c|c| } 
 \hline
 $\gamma$ & 1.4 & 1.45 & 1.5 & 1.55 & 1.6 \\ 
 \hline
 Ex post SPO stable draws & 500 & 498 & 496 & 497 & 492 \\ 
 \hline
\end{tabular}
\caption{The number of ex post SPO stable draws (out of 500) for each value of $\gamma$.}
\label{table:modulate-gamma}
\end{table}

We do see qualitatively similar behavior to modulating $\theta$ in terms of participation, decentralization, and expenditure. In terms of participation, Figure \ref{fig:vary-gamma-participation} shows that lower $\gamma$ values result in more {\it stake} participating, but this is simply a reflection of the fact that the resulting Pareto distribution has a heavier tail, and hence the expected stake per player increases, thus increasing the overall stake in the system. The left image from the figure though shows proportional participation, in which we see that proportionally as $\gamma$ increases, there are less SPOs and more delegators. This is also in line with the intuition that larger $\gamma$ values result in distribution with less "high-wealth" individuals, which under threshold strategies are precisely those who become SPOs.

\begin{figure}
    \centering
    \includegraphics[width=0.45\textwidth]{./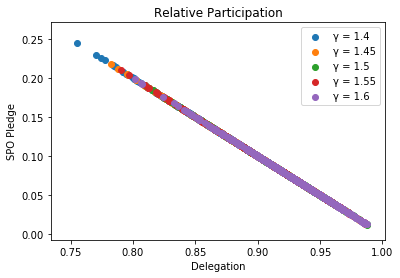}
    \includegraphics[width=0.48\textwidth]{./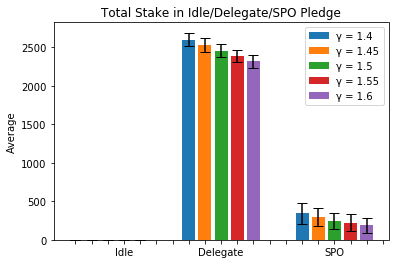}
    \caption{This Figure provides a breakdown of participation as inequality in the Pareto distribution varies from $\gamma \in \{1.4,1.45,1.5,1.55,1.6\}$. $\gamma$ values correspond to different colors and each point in the plot corresponds to draws of types that gave rise to ex post SPO stability. The axes represent the relative proportion of stake that is used for delegation and SPO pledges. As we can see, all points lie on a line indicative of the fact that for no draw do we see idle agents. The right bar chart provides average values of absolute stake used by agents being idle, delegators or SPOs respectively for different threshold values.}
    \label{fig:vary-gamma-participation}
\end{figure}

In Figure \ref{fig:vary-gamma-decentralization-expenditure} we see that $\gamma$ also has an impact on the overall spread of decentralization and expenditure objectives. The range of decentralization and expenditure values is lower than when modulating $\theta$ alone, but we see that $\gamma = 1.6$ results in more decentralization at lower costs. Given the fact that the relative participation breakdown has more delegates for higher $\gamma$ values, this improved performance is most likely from the fact that overall there is less stake in the system in expectation for larger $\gamma$ values, which in turn reduces expenditure and decentralization.

\begin{figure}
    \centering
    \includegraphics[width=0.45\textwidth]{./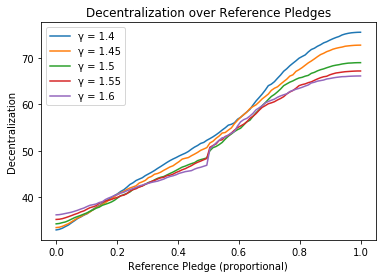}
    \includegraphics[width=0.475\textwidth]{./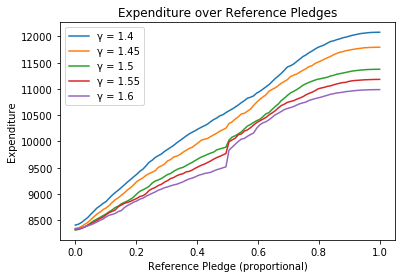}
    \includegraphics[width=0.475\textwidth]{./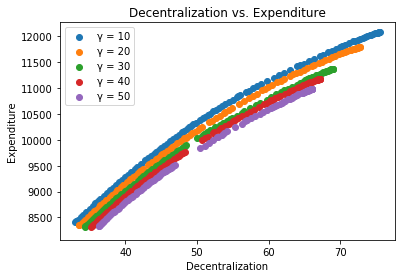}
    \caption{The top two plots provide insight regarding the spread of values the decentralization and expenditure objectives can take for ex post PNE as Pareto inequality varies from $\gamma \in \{1.4,1.45,1.5,1.55,1.6\}$. The $x$ axis for both of thesw plots correspond to different representative PNE as per Algorithm \ref{alg:greedy-delegation}, in which the defining characteristic of a representative PNE is the reference pledge $\bar{\lambda}_j$, which is a proportional value relative to the spread of SPO pledges. The bottom graph simultaneously plots the performance of each representative ex post PNE in terms of decentralization and expenditure.}
    \label{fig:vary-gamma-decentralization-expenditure}
\end{figure}

\paragraph*{Impact of SPO Cost}

We modulate the distribution of SPO costs in two different ways. First we consider settings of $[c_{min},c_{max}]$ that have the same mean of $c = 0.5$ of the baseline parameter settings. In addition to this, we consider $[c_{min},c_{max}]$ settings of a fixed width of $0.1$, but with distinct means. Tables \ref{table:modulate-cwidth} and \ref{table:modulate-cmean} respectively show the empirical frequency of the baseline threshold strategy being ex post SPO stable. The main observation we can draw from the tables is that changes in cos distribution do not have a significant impact for the base parametric setting.

\begin{table}
\centering
\begin{tabular}{ |c|c|c|c| } 
 \hline
 $[c_{min},c_{max}]$ & [0.45,0.55] & [0.4,0.6] & [0.2,0.8] \\ 
 \hline
 Ex post SPO stable draws & 500 & 496 & 500 \\ 
 \hline
\end{tabular}
\caption{The number of ex post SPO stable draws (out of 500) for mean preserving $[c_{min},c_{max}]$ of differing width.}
\label{table:modulate-cwidth}
\end{table}

\begin{table}
\centering
\begin{tabular}{ |c|c|c|c|c|c| } 
 \hline
 $[c_{min},c_{max}]$ & [0.35,0.45] &[0.45,0.55] & [0.55,0.66] & [1.95,2.05] & [4.95,5.05] \\ 
 \hline
 Ex post SPO stable draws & 497 & 500 & 498 & 495 & 496 \\ 
 \hline
\end{tabular}
\caption{The number of ex post SPO stable draws (out of 500) for $[c_{min},c_{max}]$ settings with differing means.}
\label{table:modulate-cmean}
\end{table}

In Figures \ref{fig:vary-cwidth-participation} and \ref{fig:vary-cmean-participation} we see the impact that varying the mean of $[c_{min},c_{max}]$ has on overall participation of the baseline threshold strategy. In addition, Figures \ref{fig:vary-cwidth-decentralization-expenditure} and \ref{fig:vary-cmean-decentralization-expenditure} visualize the changes in decentralization and participation objectives at different representative ex post PNE for different SPO cost settings. We see that increasing SPO costs at this scale do not have much of an effect on decentralization, but they do marginally decrease expenditure. This latter point stems from the fact that larger SPO costs imply that pools have lower capacities, hence they are necessarily earning less pool rewards at saturation due to their smaller sizes. 

\begin{figure} 
    \centering
    \includegraphics[width=0.45\textwidth]{./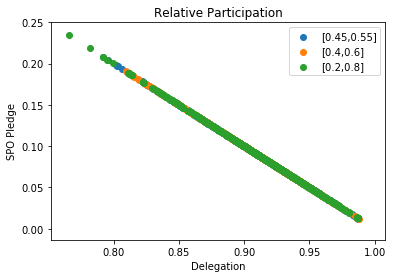}
    \includegraphics[width=0.48\textwidth]{./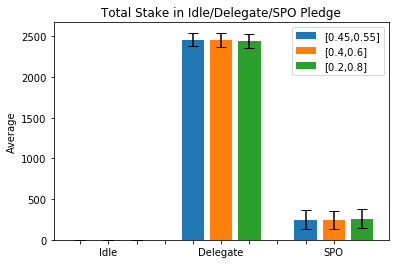}
    \caption{This Figure provides a breakdown of participation as SPO cost distributions vary in width but preserve mean. Different widths correspond to different colors and each point in the plot corresponds to draws of types that gave rise to ex post SPO stability. The axes represent the relative proportion of stake that is used for delegation and SPO pledges. As we can see, all points lie on a line indicative of the fact that for no draw do we see idle agents. The right bar chart provides average values of absolute stake used by agents being idle, delegators or SPOs respectively for different threshold values.}
    \label{fig:vary-cwidth-participation}
\end{figure}

\begin{figure}
    \centering
    \includegraphics[width=0.45\textwidth]{./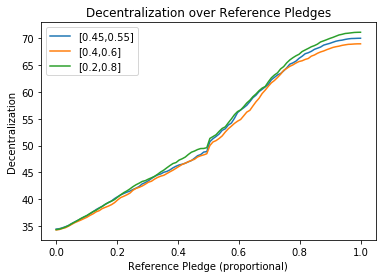}
    \includegraphics[width=0.475\textwidth]{./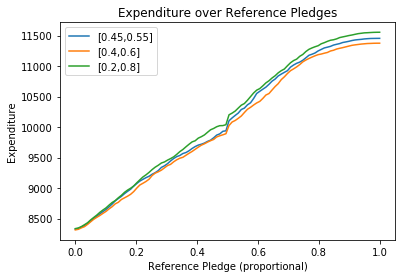}
    \includegraphics[width=0.475\textwidth]{./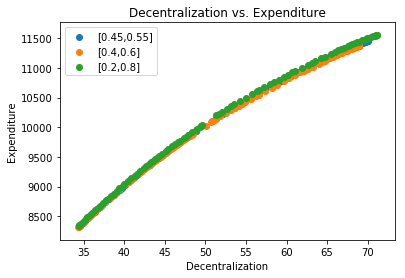}
    \caption{The top two plots provide insight regarding the spread of values the decentralization and expenditure objectives can take for ex post PNE cost distributions vary in width while preserving means. The $x$ axis for both of these plots correspond to different representative PNE as per Algorithm \ref{alg:greedy-delegation}, in which the defining characteristic of a representative PNE is the reference pledge $\bar{\lambda}_j$, which is a proportional value relative to the spread of SPO pledges. The bottom graph simultaneously plots the performance of each representative ex post PNE in terms of decentralization and expenditure.}
    \label{fig:vary-cwidth-decentralization-expenditure}
\end{figure}

\begin{figure} 
    \centering
    \includegraphics[width=0.45\textwidth]{./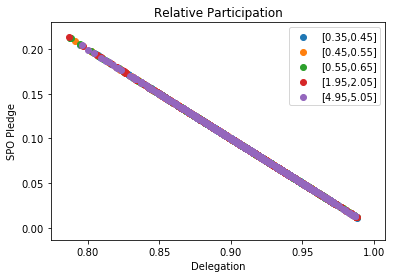}
    \includegraphics[width=0.48\textwidth]{./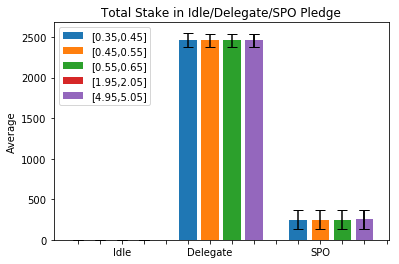}
    \caption{This Figure provides a breakdown of participation as SPO cost distributions vary in mean. Different means correspond to different colors and each point in the plot corresponds to draws of types that gave rise to ex post SPO stability. The axes represent the relative proportion of stake that is used for delegation and SPO pledges. As we can see, all points lie on a line indicative of the fact that for no draw do we see idle agents. The right bar chart provides average values of absolute stake used by agents being idle, delegators or SPOs respectively for different threshold values.}
    \label{fig:vary-cmean-participation}
\end{figure}

\begin{figure}
    \centering
    \includegraphics[width=0.45\textwidth]{./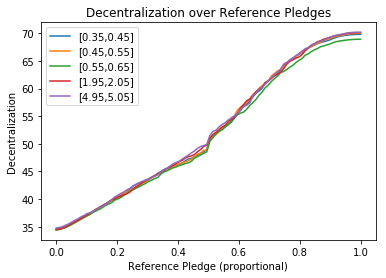}
    \includegraphics[width=0.475\textwidth]{./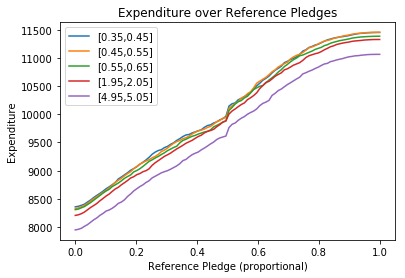}
    \includegraphics[width=0.475\textwidth]{./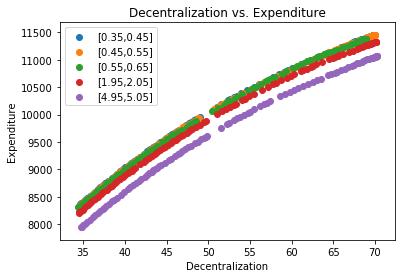}
    \caption{The top two plots provide insight regarding the spread of values the decentralization and expenditure objectives can take for ex post PNE cost distributions vary in mean. The $x$ axis for both of thesw plots correspond to different representative PNE as per Algorithm \ref{alg:greedy-delegation}, in which the defining characteristic of a representative PNE is the reference pledge $\bar{\lambda}_j$, which is a proportional value relative to the spread of SPO pledges. The bottom graph simultaneously plots the performance of each representative ex post PNE in terms of decentralization and expenditure.}
    \label{fig:vary-cmean-decentralization-expenditure}
\end{figure}

\section{Conclusion}
\label{sec:conclusion}

In this work, we have provided a multi-objective framework for studying tradeoffs inherent in delegation systems for PoS cryptocurrencies. We began by providing a broad game theoretic framework for incentives in delegation systems, and successively narrowed down the game at hand to both represent key characteristics of existing PoS delegation systems, and also be tractable to study in a Bayesian framework. We provide key sufficient conditions for equilibria in the one-shot and Bayesian setting and use this characterization to study the potential performance of various payment schemes with respect to three key objectives: participation, decentralization and expenditure. The computational tools we provide give us insight with respect to the inherent tradeoffs system designers may face when attempting to maximize for these three natural objectives. In particular, our experimental results show scenarios in which modulating payment schemes can provide the flexibility needed to prioritize specific objectives amongst the 3, albeit at a potential detriment to the remaining objectives. 

With increased usage of delegation in PoS protocols, it will be important to conceptualize inherent tradeoffs faced by system designers, and techniques such as ours can inform a collective decision in terms of what reward sharing schemes to use depending on overall priorities. We believe our work is a preliminary foray into the tradeoffs that must necessarily be struck in stake delegation systems. Indeed there remain many future directions of work which can further elucidate system tradeoffs. For example, a natural thread would be to relax the constraints inherent in proper delegation games (for example in the $\rho$ functions used), though this would necessitate a much more involved game-theoretic analysis. In addition, we made the simplifying assumption that players either choose to be idle, delegate or be SPOs. In practice, agents can split their stake into many of these roles, and it would be important to see what tradeoffs arise with an increased action space. Finally, as delegation schemes become more prevalent, it may very well be the case that multiple reward schemes interact within a given system, in which case it would be important to understand the potential implications of players being able to choose which delegation schemes to participate in or split their delegated stake across multiple such schemes.



\bibliography{refs.bib}

\end{document}